\def\BibTeX{{\rm B\kern-.05em{\sc i\kern-.025em b}\kern-.08emT\kern-.1667em\lower.7ex\hbox{E}\kern-.125emX}}
\newtheorem{theorem}{Theorem}
\newtheorem{lemma}{Lemma}
\newtheorem{invariant}{Invariant}
\newtheorem*{lemma*}{Lemma}
\newtheorem{corollary}{Corollary}
\newtheorem*{theorem*}{Theorem}
\newtheorem*{corollary*}{Corollary}
\newcommand{\cS}{{\mathcal{S}}}
\newcommand{\cF}{{\mathcal{F}}}
\newcommand{\cP}{{\mathcal{P}}}
\newcommand{\cX}{{\mathcal{X}}}
\newcommand{\cL}{{\mathcal{L}}}
\newcommand{\eat}[1]{}
\newcommand{\calS}{\mathcal{S}}
\newcommand{\calF}{\mathcal{F}}
\newcommand{\pr}[1]{{\rm Pr} \left[ #1 \right]}
\newcommand{\ex}[1]{{\rm E} \left[ #1 \right]}
\newcommand{\eps}{\varepsilon}
\definecolor{Darkblue}{rgb}{0,0,0.4}
\definecolor{Brown}{cmyk}{0,0.61,1.,0.60}
\definecolor{Purple}{cmyk}{0.45,0.86,0,0}
\definecolor{brickred}{rgb}{0.8, 0.25, 0.33}
\newcommand{\elemm}{MAP_{elem}}
\newcommand{\setm}{MAP_{set}}
 \newcommand{\fab}[1]{\textcolor{red}{#1}}
  \def\rem#1{{\marginpar{\raggedright\scriptsize #1}}}
  \newcommand{\fabr}[1]{\rem{\textcolor{red}{$\bullet$ #1}}}
  \newcommand{\der}[1]{\rem{\textcolor{green}{$\bullet$ #1}}}
  \newcommand{\fab}[1]{#1}
  \newcommand{\fabr}[1]{}
  \newcommand{\der}[1]{}
\title{Dynamic Set Cover: Improved Algorithms and Lower Bounds}
\author[1]{Amir Abboud \thanks{amir.abboud@ibm.com}}
\author[2]{Raghavendra Addanki \thanks{raddanki@cs.umass.edu}}
\author[3]{Fabrizio Grandoni \thanks{fabrizio@idsia.ch. F. Grandoni is partially supported by the SNSF grants 200021$\_$159697$/$1 and 200020B$\_$ 182865$/$1.}}
\author[4]{Debmalya Panigrahi\thanks{debmalya@cs.duke.edu. D. Panigrahi is partially supported by NSF contracts CCF 1535972, CCF 1527084, an NSF CAREER Award CCF 1750140, and the Indo-US Virtual Networked Joint Center on Algorithms under Uncertainty.}}
\author[2]{Barna Saha \thanks{barna@cs.umass.edu. B. Saha is partially supported by an NSF CRII grant CCF 1464310, an NSF CAREER Award CCF 1652303, and an Alfred P. Sloan fellowship.}}
\affil[1]{IBM Almaden Research Center}
\affil[2]{University of Massachusetts Amherst}
\affil[3]{IDSIA, USI-SUPSI}
\affil[4]{Duke University}
\begin{document}
\date{}
\maketitle
\begin{abstract}
We give new upper and lower bounds for the {\em dynamic} set cover problem. First, we give a $(1+\eps) f$-approximation for fully dynamic set cover in $O(f^2\log n/\eps^5)$ (amortized) update time, for any $\epsilon > 0$, where $f$ is the maximum number of sets that an element belongs to. In the decremental setting, the update time can be improved to $O(f^2/\eps^5)$, while still obtaining an $(1+\eps) f$-approximation. These are the first algorithms that obtain an approximation factor linear in $f$ for dynamic set cover, thereby almost matching the best bounds known in the offline setting and improving upon the previous best approximation of $O(f^2)$ in the dynamic setting.

To complement our upper bounds, we also show that a linear dependence of the update time on $f$ is necessary unless we can tolerate much worse approximation factors. Using the recent distributed PCP-framework, we show that any dynamic set cover algorithm that has an amortized update time of $O(f^{1-\eps})$ must have an approximation factor that is $\Omega(n^\delta)$ for some constant $\delta>0$ under the Strong Exponential Time Hypothesis. 
\end{abstract}

%
%

\newpage
\section{Introduction}

Suppose, we need to solve a combinatorial optimization problem where the input to the problem changes over time. In such a dynamic setting, recomputing the solution from scratch after every update can be prohibitively time consuming, and it is natural to seek dynamic algorithms that provide faster updates. In the last few decades, efficient dynamic algorithms have been discovered for many combinatorial optimization problems, particularly in graphs such as shortest paths \cite{F85,di:04,ack:17,DBLP:conf/stoc/ItalianoKLS17}, connectivity \cite{HK99,HDT01,W17,ACK17}, maximal independent set and coloring \cite{BCHN18,aoss:18,DBLP:conf/icalp/OnakSSW18}. For many of these problems, maintaining exact solutions is prohibitively expensive under various complexity conjectures \cite{AW14,KPP16,AVY18,HKNS15}, and thus the best approximation bounds are sought. In their seminal work~\cite{or:10}, Onak and Rubinfeld proposed an algorithm for {\em matching} and {\em vertex cover} that maintains $O(1)$-approximate solutions to the maximum matching and minimum vertex cover in the graph. The algorithm runs in $t\cdot {\rm polylog}(n)$ time for any sequence of $t$ edge insertions and deletions in an $n$-vertex graph, i.e., in $O({\rm polylog}(n))$ time when {\em amortized} over all the updates. This has led to a flurry of activity in dynamic algorithms for matching and vertex cover \cite{s:16,ns:16,gp:13,bch:17,bhattacharya2016new,bhi:15,bs:16,bgs:15}, and more recently, for the more general set cover problem~\cite{bhi:icalp,gkkp:17,bch:17} that we study in this paper.

In the set cover problem, we are given a universe $X$ of $n$ elements  and a family $\calS$ of $m$ sets on these elements. The goal is to find a minimum-cardinality subfamily of sets $\calF \subseteq \calS$ such that  $\calF$ covers all the elements of $X$. The two traditional lines of inquiry for this problem are via greedy and primal dual algorithms, and have respectively led to a $\ln n$- and an $f$-approximation. Here, $f$ is the maximum number of sets that an element belongs to in the set system $\calS$. Both these results are known to be tight under appropriate complexity-theoretic assumptions~\cite{DS14,KR08}. In the dynamic setting, the set system $\calS$ is fixed, but the set of elements that needs to be covered in $X$ changes over time. In particular, after the {\em insertion} of a new element, or the {\em deletion} of an existing one, the solution has to be updated to maintain feasibility and the approximation guarantee. The time taken to perform these updates is called the {\em update time} of the algorithm, and is often stated amortized over any fixed prefix of updates. 


As in the case of the offline problem, dynamic algorithms for set cover have also followed two lines of inquiry.
The first is to use greedy-like techniques, which were recently shown to yield an $O(\log n)$-approximation in $O(f\log n)$ 
update time by Gupta, Kumar, Krishnaswamy, and Panigrahi~\cite{gkkp:17}.\footnote{All update times stated in this paper are amortized, unless stated otherwise.}
The second is to use a primal-dual framework, which was employed by Bhattacharya, Henzinger, and Italiano~\cite{bhi:icalp} 
to give an $O(f^2)$-approximation in $O(f\log{(m+ n)})$ update time. Gupta {\em et al.}~\cite{gkkp:17}
and Bhattacharya, Chakrabarty, and Henzinger~\cite{bch:17} also obtained a 
different but incomparable result using the primal-dual technique, which improves the update time to $O(f^2)$ thereby removing 
the dependence on $n$ and $m$, but at the cost of a weaker approximation bound of $O(f^3)$. What stands 
out in these results is that: 
\begin{itemize}[noitemsep]
\item While dynamic and offline approximation factors match
at $O(\log n)$, there is no $O(f)$-approximation known for the dynamic 
setting. Indeed, the only previous algorithm we are aware of that 
achieves this bound is one that recomputes the offline $f$-approximation 
after every update.
\item The update times of these algorithms depend on $\log n$ and $f$. While the 
dependence on $\log n$ is not required, at least if we settle for an $O(f^3)$ approximation \cite{gkkp:17,bch:17},
it is not clear if the polynomial dependence on $f$ is fundamental. For instance, 
might it be possible to design a dynamic set cover algorithm whose update time 
only has a logarithmic dependence on $f$?
\end{itemize}
\subsection{Our Results}
\label{sec:results}
Our first result closes the gap between offline and dynamic approximation for the set cover problem: 
{\bf for any $\eps > 0$, we give a $(1+\eps) f$-approximation algorithm for dynamic set cover with an update time of $O(f^2\log n/\eps^5)$}. 
Previous algorithms for dynamic set cover heavily rely on deterministically maintaining a greedy-like or primal-dual structure 
on the set cover solution. Instead, our algorithm is based on the observation that 
a simple offline algorithm for the set cover problem achieves a $(1 +\eps) f$-approximation in $O(f/\eps)$
expected update time when the elements are deleted in a random order. We switch this statement
around by transferring the randomness to the algorithm in order to handle an arbitrary
sequence of deletions (and insertions). As a result, our algorithm is randomized, and our update
time bound holds in expectation. (The approximation bound holds deterministically.)

In the {\em decremental} setting where elements can only be deleted but not inserted,
a simplification of the above algorithm yields the same {\bf approximation factor of $(1+\eps) f$ in amortized update time $O(f^2/\eps^5)$}. This can be compared with the result of Gupta {\em et al.} \cite{gkkp:17} which achieves a (larger) $O(f^3)$-approximation with (roughly) the same update time, but in the fully dynamic case. As far as we know, the approximation bounds of \cite{gkkp:17,bch:17} do not change when considering the decremental setting, which has been extensively studied in the past for other problems~\cite{henzinger2015unifying,DBLP:conf/stoc/ItalianoKLS17,DBLP:conf/focs/ChechikHILP16}.\footnote{For the incremental setting, where elements can be inserted but not deleted, the offline set cover algorithm itself gives an $f$-approximation in $O(f)$ update time.}

Finally, we turn to the problem of determining the dependence of the update time on $f$. Using the recently introduced framework of distributed PCP \cite{ARW17} from 
fine-grained complexity theory, {\bf we show that under the {\em Strong Exponential Time Hypothesis (SETH)}, any dynamic 
set cover algorithm that has an (amortized) update time of $O(f^{1-\eps})$ for any fixed $\eps > 0$ must have an approximation
factor of $(n/\log f)^{\Omega(1)}$}. Since a polynomial dependence on $n$ in the approximation factor is rather weak, this
result essentially states that any dynamic set cover algorithm must have a linear dependence on $f$ in the update time\footnote{In our model, the elements and sets are fixed, as well as their membership relations, and the updates can change which of the elements are ``active'' in the instance. Thus, an element insertion can be specified with $O(\log{n})$ bits, rather than the $\Omega(f)$ bits that may be required to list all its membership relations. This makes an $\Omega(f)$ lower bound on the update time non-trivial.}. 
This shows the update time bound of \cite{gkkp:17} to achieve $O(\log{n})$ approximation is essentially tight within a $\log{n}$ factor. Our lower bound holds even if the algorithm is allowed a preprocessing stage with arbitrary polynomial runtime, and it also applies to the set-updates model where the elements are fixed but sets get inserted and deleted. This model is much more popular in the streaming setting \cite{MV16,AK18}, especially when there are only insertions (see the work by Indyk et al. \cite{I+17} and the many references therein).
This is a novel application of the growing area of fine-grained complexity theory to show hardness of approximation of an NP-Hard problem.
\subsection{Our Techniques}
\label{section:preliminary}
A natural starting point for our work is to use the deterministic greedy or primal dual techniques for dynamic set cover from \cite{bhi:icalp,gkkp:17,bch:17}. An alternative strategy is to generalize previous randomized approaches for dynamic vertex cover \cite{bgs:15,s:16}. At a very high level, all these algorithms derive their results from maintaining, either explicitly or implicitly, a very structured dual solution that lower bounds the cost of the algorithm. Indeed, the algorithm of \cite{gkkp:17} for dynamic set cover can be thought of as a derandomization of the dynamic vertex cover algorithm of \cite{s:16}. In order to improve the approximation factor to $O(f)$, these dual solutions must only violate the dual packing constraints by a constant factor (as against an $\Omega(f)$ violation in the previous results), but this requirement is too strict for the analysis framework of these papers that effectively rely on an $f$-discretization of the dual space. 

Hence, we need a significantly new approach to improve the approximation factor to $O(f)$. We start with the following folklore algorithm for {\em offline} set cover. Initially, all elements are uncovered and the algorithm has an empty solution. Pick an arbitrary uncovered element and call it a pivot $p$. Then, include all sets containing $p$ in the solution and mark all elements in those sets as covered. Repeat this process until all elements get covered. This algorithm runs in $O(nf)$ time and achieves an $f$-approximation, since no two pivots share a common set and the algorithm picks at most $f$ sets for each pivot. We call this the {\em deterministic covering} algorithm.

Now, consider a decremental setting where elements are deleted over time, but in a {\em uniform random} order. 
A small modification to the deterministic covering algorithm gives a $(1+ O(\eps)) f$-approximation in $O(f/\eps)$ update time in this setting. Initially, we run deterministic covering to produce a feasible cover. 
During the deletion phase, the approximation bound may no longer hold because pivots are being deleted. To restore the bound, we re-run the deterministic covering algorithm whenever an $\eps$-fraction of the pivots have been deleted. Since the number of undeleted pivots forms a lower bound on the optimal solution, it follows that this algorithm maintains an $f/(1-\eps) = (1+O(\eps)) f$-approximation. 

Let us now consider the update time. Clearly, deterministic covering takes $O(nf)$ time in every run. So, the question is: how frequently do we run it? Because of the random deletion order, we expect to delete an $\eps$-fraction of all elements before an $\eps$-fraction of pivots gets deleted. This suggests an informal amortized update bound of $O(nf/(\eps n)) = O(f/\eps)$. It turns out that this informal idea can be made formal, but we skip the details here since we are going to use this only for intuitive purposes.

More interesting for us is to transition from a random deletion order to an adversarial deletion order. The same update rule gives a $(1+\eps) f$-approximation, but now, the bound on update time may no longer hold. For instance, if all the pivots are deleted before other elements, the amortized update time is clearly much higher when the first $\eps$-fraction of pivots gets deleted. 

Our main idea, at this juncture, is to transfer the randomization from the deletion sequence to the algorithm itself. More specifically, instead of picking a pivot arbitrarily from the uncovered elements in each step, let us select it uniformly at random. We call this the {\em random covering} algorithm. Our hope is that an (oblivious) adversary deleting a single element will be able to pick a specific pivot with probability no higher than $1/n$. This would ensure that in expectation, an  $\eps$-fraction of the elements will have to be deleted before an $\eps$-fraction of pivots is, as in the random deletion scenario. 

However, this intuition is not quite correct. While the first pivot is indeed uniformly distributed over all elements, the subsequent pivots are not. To see this, consider the following example: suppose the sets represent edges of a graph containing $(n-2)/f$ cliques on $f$ vertices each, and an isolated edge. For a vertex on the isolated edge to be chosen as the second pivot, it must not be covered by sets containing the first pivot and should be selected as the second pivot; the probability for this event is given by: $(1 - 2/n)\cdot (1/(n-f))$. Clearly, this probability exceeds $1/n$ for $f > 2$.
As a consequence, the expected number of element deletions after which we need to run random covering might be smaller than $\eps n$. To overcome this bottleneck, we employ a more fine-grained update procedure: instead of running random covering over the entire undeleted instance, we run it only for a subset of elements. We maintain sufficient structure in the solution to still claim a $(1+\eps) f$-approximation, while improving the update time to $O(f^2/\eps^5)$ for the decremental setting, and $O(f^2\log n/\eps^5)$ for the fully dynamic setting where elements can be inserted in addition to deletions.

\eat{

\smallskip\noindent {\bf Roadmap.}
We present the random covering algorithm for the decremental setting in Section~\ref{sec:decremental}. In Section~\ref{sec:fullyDynamic}, we further generalize these ideas to the fully dynamic setting, albeit at a slightly worse update time. Finally in Section~\ref{sec:lb}, we describe our lower bound results. 

}
\section{The Decremental Set Cover Algorithm}
\label{sec:decremental}

In this section, we give a dynamic set cover algorithm for the decremental setting.
We denote the initial set system by $(X, \cS)$, where $\cS = \{S_1, S_2, \ldots, S_m\}$
is a collection of subsets of the ground set $X$ that contains $n$ elements. The 
maximum number of subsets that an element belongs to is denoted $f$:
$$f = \max_{x\in X} |\{i: x\in S_i\}|.$$
The elements are deleted in a fixed sequence, independent of the randomness of 
the algorithm, that is represented by $X = \{x_1, x_2, \ldots, x_n\}$.

\subsection{The Algorithm}
The description of the algorithm comprises two phases: the {\em initial phase}
where the algorithm selects a feasible solution at the outset, and the {\em update phases}
where the algorithm changes its solution in response to the deletion of elements. 
The feasible solution that the algorithm maintains dynamically
is denoted by $\cF$. Recall that the goal is to ensure that the cost of $\cF$ is at 
most $(1+\eps)f$ times that of an optimal solution for the set of undeleted 
elements at all times. 

Both the initial and the update phases use a common subroutine that we call 
the {\em random cover} subroutine. We describe this subroutine first.

\smallskip\noindent{\bf The Random Cover Subroutine.}
The random cover subroutine takes as input a set system $(X', \cS')$ and outputs
a feasible set cover solution $\cF'$ for this set system. The algorithm is iterative,
where each iteration starts with a set of {\em uncovered} elements $Y\subseteq X'$,
adds a collection of sets $\cF^+ \subseteq \cS'$ to the solution $\cF'$, and 
removes all the elements covered by the sets in $\cF^+$ from the set of the uncovered 
elements $Y$ for the next iteration. Initially, all elements in $X'$ are uncovered, 
i.e., $Y = X'$, and the solution $\cF'$ is empty, i.e., $\cF' = \emptyset$. It only 
remains to describe an iteration, or more precisely, the sets $\cF^+$ added to 
the solution $\cF'$ in an iteration. The selection of $\cF^+$ has three steps.
First, the algorithm picks the set in $\cS'$ that covers the maximum number 
of uncovered elements, breaking ties arbitrarily. Let us call this set $Z$,
i.e., $Z = \arg\max_{S\in \cS'} |S \cap Y|$. Next, the algorithm chooses an 
element in $Z \cap Y$, i.e., an uncovered element in the chosen set, {\em uniformly
at random}, and calls this element the {\em pivot} for the current iteration. 
Let us call this pivot $p \in_{\rm u.a.r.} Z\cap Y$. Finally, all sets in $\cS'$ that 
contain the pivot are added to the solution, i.e., $\cF^+ = \{S\in \cS': p\in S\}$. 
The random cover subroutine ends when all elements in $X'$ are covered by the 
solution  $\cF'$, i.e., $Y = \emptyset$. This algorithm can be implemented in $O(f|X'|)$ 
deterministic time (details in Section \ref{sec:running-time}).

The above completes the description of the random cover subroutine. 
However, it will be convenient to introduce some additional notation 
for this process that we will use later. 
Each iteration is characterized by its pivot $p$. 
We map the pivot to the set $S(p):= Z\cap Y$ from which it is chosen. 
If $|S(p)| \in [2^i, 2^{i+1})$, we say that $p$ is a
level-$i$ pivot, and denote $\ell(p) = i$. Note that by the definition of the 
random cover subroutine, the pivots chosen in successive iterations have 
monotonically non-increasing levels, i.e., if pivot $p$ is chosen in an 
earlier iteration and pivot $p'$ in a later iteration, then 
$\ell(p) \geq \ell(p')$. Finally, if the sets $\cF^+$ are added to the 
solution $\cF'$ in an iteration with pivot $p$, then we denote 
$\cF(p) = \cF^+$. The set of previously uncovered elements that are 
covered by $\cF(p)$ is denoted $\cX(p)$.

\smallskip\noindent{\bf Initial Phase.}
In the initial phase, the random cover subroutine is run on the the entire
input set system $(X, \cS)$. This produces the initial solution $\cF$.

In the algorithm, we also maintain sets $P$, $D$, and $U$ that respectively 
represent {\em all}, {\em deleted}, and {\em undeleted} pivots. At the end 
of the initial phase, all the pivots in $\cF$ are added to $P$
and $U$, and $D$ is empty. When an element $e$ is deleted, if $e$ is in
$P$, then we move $e$ from $U$ to $D$, i.e., change its status from
undeleted to deleted. Importantly, we keep this element in $P$. 
Changes to $P$ are done only at the end of an update phase that 
we describe below.

\smallskip\noindent{\bf Update Phase.}
An update phase is triggered when the number of deleted pivots
exceeds an $\eps$-fraction of the total number of pivots,
i.e., $|D| \geq \eps\cdot |P|$. In an update phase, 
the algorithm first fixes a level $\ell$ using a process that we describe 
later called the {\em level fixing process}. Having fixed this level $\ell$,
the algorithm discards all sets $\cF(p)$ from $\cF$ that were added by 
pivots $p$ at levels $\ell$ or lower, i.e., where $\ell(p) \leq \ell$.
Correspondingly, these pivots $p$ are also removed from $P$ and from 
either $D$ or $U$ depending on whether they are deleted or undeleted.
As a result of this change to $\cF$, some elements become uncovered in $\cF$;
this set is denoted by
$X'$. The algorithm now runs the Random Cover subroutine on the instance 
$(X', \cS')$ induced by $X'$, where $\cS'=\{S\cap X':S\in \cS, S\cap X'\neq \emptyset\}$. The resulting sets $\cF'$ are added to the overall solution $\cF$.
Correspondingly, the newly selected pivots are also added to $P$ and $U$.
We say that levels $\ell$ and below have been updated in the current update phase. (Note that the newly selected pivots will be at level $\ell$ or below.)
Clearly, this restores feasibility of the solution $\cF$. We already argued that the call to the Random Cover subroutine can be performed in $O(f|X'|)$ deterministic time. The same upper bound holds for the remaining operations related to the construction of the instance $(X',\cS')$ and to the update of the approximate solution (see Section \ref{sec:running-time} for the details).

\smallskip\noindent{\bf The level fixing process.}
We now describe the level fixing process. Let $\{0,1,\ldots,L=\lfloor \log_2 n \rfloor\}$ be the set of levels. 
Let $P_j$, $D_j$, and $U_j$ respectively denote the current total set of pivots, deleted pivots, and undeleted pivots
at a given level $j$. 
This process finds a level $\ell$
with the following property: for every level $i \leq \ell$, $\sum_{j=i}^{\ell}|D_j|\geq \eps \cdot \sum_{j=i}^{\ell}|P_j|$.
In other words, the fraction of deleted pivots in levels $i, i+1, \ldots, \ell$ is at least an 
$\eps$-fraction of the total number of pivots in these levels. 
We say that level $\ell$ is {\em critical}. The next lemma claims that at least one critical level 
exists whenever the number of deleted pivots is an $\eps$-fraction 
of the total number of pivots.

\newcommand{\fail}{\text{\sc fail}\xspace}

\begin{lemma}
\label{lem:criticalExists}
    If $\sum_{j=0}^{L}|D_j|\geq \eps\cdot \sum_{j=0}^{L}|P_j|$, then there exists at least one critical level.
\end{lemma}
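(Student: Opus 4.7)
The plan is to exhibit a critical level explicitly by a minimality argument on the prefix sums. Define $\ell$ to be the smallest level in $\{0,1,\ldots,L\}$ satisfying the prefix inequality
\[
\sum_{j=0}^{\ell}|D_j|\;\geq\;\eps\cdot \sum_{j=0}^{\ell}|P_j|.
\]
Such an $\ell$ exists because the hypothesis of the lemma guarantees that $\ell=L$ already satisfies it, so the set of candidate levels is nonempty.

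Next I would show that this smallest $\ell$ is critical, i.e.\ that the suffix inequality $\sum_{j=i}^{\ell}|D_j|\geq \eps\sum_{j=i}^{\ell}|P_j|$ holds for every $i\leq \ell$. The case $i=0$ is just the defining prefix inequality for $\ell$. For $1\leq i\leq \ell$, I would use the minimality of $\ell$: since $i-1<\ell$, we must have the strict reverse inequality on the shorter prefix,
\[
\sum_{j=0}^{i-1}|D_j|\;<\;\eps\cdot \sum_{j=0}^{i-1}|P_j|.
\]
Subtracting this from the prefix inequality at level $\ell$ gives exactly the desired suffix inequality, as the $\eps\sum_{j=0}^{i-1}|P_j|$ term on the right cancels the part of $\eps\sum_{j=0}^{\ell}|P_j|$ that we do not want to keep.

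I do not foresee a real obstacle; the statement is essentially a discrete ``first passage'' observation about prefix sums, and the only thing to be slightly careful about is the edge case $i=0$ (handled directly) and the fact that the strict inequality from minimality gives us enough slack to cover the non-strict inequality we need on the suffix. No additional structural properties of the random cover subroutine, of pivots, or of levels are required beyond the bookkeeping identity $P_j=D_j\cup U_j$ (implicit in the fact that $|D_j|,|P_j|$ are nonnegative integers), so the proof reduces to this short two-line algebraic manipulation after the minimality choice of~$\ell$.
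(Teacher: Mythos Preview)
Your proof is correct and is in fact cleaner than the paper's. The paper argues by contradiction: assuming no level is critical, it builds a sequence $\ell_1=L,\ell_2,\ldots$ where each $\ell_{k+1}=\fail(\ell_k)-1$ and $\fail(\ell)$ is a witnessing index where the suffix condition fails for $\ell$; these intervals $[\fail(\ell_k),\ell_k]$ partition $[0,L]$, and summing the failing inequalities contradicts the hypothesis. Your argument instead is constructive: you take $\ell$ to be the first level where the prefix inequality holds, and then the minimality of $\ell$ gives the strict reverse inequality on every shorter prefix, which after subtraction yields the required suffix inequality at every $i\le\ell$. This ``first-passage'' choice not only avoids the interval decomposition but also pins down a specific critical level (indeed the lowest possible one under the prefix criterion), which matches what the algorithm actually wants when it later selects the lowest critical level. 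The only delicate point---that a non-strict $\ge$ minus a strict $<$ still gives $\ge$---you handled correctly.
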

\begin{proof}
Suppose $\sum_{j=0}^{L}|D_j|\geq \eps\cdot \sum_{j=0}^{L}|P_j|$. Assume by contradiction that the claim is not true. Hence for each level $\ell$, there exists a level $\fail(\ell)\leq \ell$ (in case of ties, take the lowest such level) such that the condition does not hold, namely $\sum_{j=\fail(\ell)}^{\ell}|D_j|<\eps\cdot \sum_{j=\fail(\ell)}^{\ell}|P_j|$. We next define a sequence of levels $\ell_1,\ldots,\ell_q$ as follows. Set $\ell_1=L$. Given $\ell_i$, halt if $\fail(\ell_i)=0$, else set $\ell_{i+1}=\fail(\ell_i)-1$ and continue with $\ell_{i+1}$. Observe that the intervals $[\fail(\ell_i),\ell_i]$ are disjoint and span $[0,L]$. We have
$$
\sum_{j=0}^{L} |D_j| =\sum_{i=1}^{q}\sum_{j=\fail(\ell_i)}^{\ell_i}|D_j|< \sum_{i=1}^{q}\eps\cdot \sum_{j=\fail(\ell_i)}^{\ell_i}|P_j| =  \eps\cdot\sum_{j=0}^{L} |P_j|.
$$
This contradicts the assumption.
\end{proof}

The next lemma, shown in Section \ref{sec:running-time}, establishes the time complexity of the above algorithm.


\begin{lemma}
\label{lma:update-runtime}
    Suppose we perform an update at critical level $\ell$. Let $p_i$ be the total number of pivots at levels $i\leq \ell$ right before this update. Then, the total time taken for this update phase is $O(\sum_{i\leq \ell}f^2 p_i 2^i)$.
\end{lemma}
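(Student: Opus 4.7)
The plan is to reduce the whole argument to bounding $|X'|$, the set of elements that become uncovered when all pivots at levels $i\leq \ell$ are discarded. The algorithm description already claims that the call to Random Cover and all auxiliary operations (removing the discarded sets $\cF(p)$ from $\cF$, rebuilding the induced instance $(X',\cS')$, and updating the approximate solution) run in $O(f|X'|)$ time in total. Thus, it suffices to establish
\[
|X'|=O\!\left(f\sum_{i\leq \ell} p_i\, 2^i\right),
\]
after which multiplying by the $O(f)$ factor yields the lemma.

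The central step is a charging argument: I would assign every $x\in X'$ to a discarded pivot $p$ with $x\in \cX(p)$ and $\ell(p)\leq \ell$. Indeed, any $x\in X'$ is undeleted and becomes uncovered precisely because all sets $\cF(p)$ with $\ell(p)\leq \ell$ are removed; no retained pivot (at level $>\ell$) covers $x$. Consider the Random Cover iteration in which $x$ was first added to the cover; the pivot $p$ of that iteration satisfies $x\in \cX(p)$ by definition, and it must have $\ell(p)\leq \ell$ because otherwise $x$ would still be covered by $\cF(p)$ after the update. Since each $x$ is charged to exactly one pivot this way, we obtain $|X'|\leq \sum_{p:\ell(p)\leq \ell}|\cX(p)|$.

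Next, I would bound $|\cX(p)|$ for a level-$i$ pivot $p$. By the definition of the level, the set $Z$ chosen by Random Cover in $p$'s iteration satisfies $|Z\cap Y|=|S(p)|<2^{i+1}$, and by the argmax property of $Z$ every other set $S$ containing $p$ also has $|S\cap Y|\leq |Z\cap Y|<2^{i+1}$. Since at most $f$ sets of $\cS$ contain $p$, the union bound gives $|\cX(p)|\leq f\cdot 2^{i+1}$. Summing over pivots level by level then yields $|X'|\leq \sum_{i\leq \ell} p_i\cdot f\cdot 2^{i+1}=O(f\sum_{i\leq \ell} p_i 2^i)$, and the target bound $O(\sum_{i\leq \ell} f^2 p_i 2^i)$ follows immediately from the $O(f|X'|)$ runtime of the update phase.

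The main obstacle I anticipate is not the counting argument itself but verifying the premise that all auxiliary operations really fit within $O(f|X'|)$ time. This requires maintaining, as invariants, explicit lists of $\cX(p)$ for each pivot $p$, back-pointers from each element to its covering pivot, and incidence lists from each element to its at most $f$ containing sets, so that the discarded pivots can be enumerated with their $\cX(p)$ scanned in linear time and the induced instance $(X',\cS')$ can be assembled by one pass over the at most $f$ sets per element in $X'$. Once these invariants are formalized and shown to be preserved by both the initial phase and each update phase, the bookkeeping cost telescopes into $O(f|X'|)$, and the lemma follows directly from the $|X'|$ bound above.
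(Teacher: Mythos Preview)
Your proof is correct and follows essentially the same route as the paper: bound $|X'|$ by $\sum_{i\leq\ell} f\, p_i\, 2^{i+1}$ via the argmax property of Random Cover (each of the at most $f$ sets containing a level-$i$ pivot covers fewer than $2^{i+1}$ new elements), then invoke the $O(f|X'|)$ running time established in the implementation section. The only detail the paper adds that you omit is a one-line argument that the $O(\ell^2)$ cost of the level-fixing process is itself dominated by the update cost, since the existence of a deleted pivot at level $\ell$ forces $|X'|\geq 2^\ell$.
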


%
%
%
%

\subsection{Analysis of the Competitive Ratio}

\begin{lemma}
\label{lem:apxDec}
The competitive ratio of the algorithm is at most $f/(1-\eps)$.
\end{lemma}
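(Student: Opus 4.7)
The plan is to combine three ingredients: (i) a structural ``independence'' property of the current pivot set $P$; (ii) an upper bound $|\cF|\le f|P|$; and (iii) the guarantee from the update rule that $|U|\ge (1-\eps)|P|$. Chaining these would give $|\cF|\le f|P|\le f|U|/(1-\eps)\le f\cdot OPT/(1-\eps)$, which is the claimed competitive ratio.

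The heart of the argument, and what I expect to be the main obstacle, is the following structural invariant, which I would prove by induction on the sequence of update phases: \emph{at every point in time, no two distinct pivots in $P$ are contained in a common set of $\cS$}. Within a single call to the random cover subroutine this is immediate, because once an element $p$ is chosen as the pivot, every set containing $p$ is placed in $\cF'$ and every element of those sets becomes covered, hence cannot be selected as a later pivot of that call. For the inductive step, consider an update at critical level $\ell$. The retained old pivots at level $>\ell$ still satisfy the invariant among themselves by the inductive hypothesis, and any two pivots produced by the new call to random cover on $(X',\cS')$ cannot share a set $S\in\cS$ either, since $S\cap X'\in\cS'$ would then witness a shared set in $\cS'$ and contradict what we already proved for a single call. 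The delicate cross-phase case is that an old retained pivot $q$ with $\ell(q)>\ell$ and a new pivot $p'\in X'$ cannot share a set $S\in\cS$: indeed $q\in S$ forces $S\in\cF(q)$, and $\cF(q)$ is untouched by the discard (only levels $\le\ell$ are discarded), so $p'\in S$ would mean $p'$ is still covered and thus $p'\notin X'$, a contradiction.

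Granted this invariant, both $|\cF|\le f|P|$ and $OPT\ge |U|$ follow quickly. On the one hand, $\cF=\bigcup_{p\in P}\cF(p)$, the collections $\cF(p)$ are pairwise disjoint (two pivots in a shared set would violate the invariant) and each has size at most $f$, giving $|\cF|\le f|P|$. On the other hand, every element of $U$ is undeleted by definition, and the invariant implies every set of $\cS$ contains at most one element of $U$, so any feasible cover of the currently undeleted universe must use at least $|U|$ distinct sets.

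The final ingredient is the invariant $|D|\le \eps|P|$, which is essentially the trigger rule itself: an update is invoked as soon as $|D|$ reaches $\eps|P|$, Lemma~\ref{lem:criticalExists} guarantees a critical level exists, and performing the update restores $|D|\le\eps|P|$. Combining $|U|=|P|-|D|\ge (1-\eps)|P|$ with the two bounds above yields
\[
|\cF|\;\le\; f\,|P|\;\le\;\frac{f}{1-\eps}\,|U|\;\le\;\frac{f}{1-\eps}\,OPT,
\]
which is the desired $f/(1-\eps)$ competitive ratio.
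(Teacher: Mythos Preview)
Your proof is correct and follows essentially the same approach as the paper's: the same structural invariant (no two pivots share a set), the same inductive argument for preserving it across update phases (including the cross-phase case via the fact that $X'$ is disjoint from all retained $\cF(q)$), and the same chaining $|\cF|\le f|P|\le \frac{f}{1-\eps}|U|\le \frac{f}{1-\eps}\,OPT$. The only cosmetic differences are that you state the invariant for all of $P$ rather than just $U$ (which is slightly stronger but equally easy), and you make the three ingredients explicit up front.
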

\begin{proof}
Consider the data structure right before the $t$-th deletion. Let $P^t$ be the total set of pivots and $U^t$ the set of undeleted pivots at that time. We also let $OPT^t$ and $\cF^t$ be the optimal and approximate solution at that time. 

Observe that $|\cF^t|\leq f\cdot |P^t|$ by construction. We claim that $|OPT^t|\geq |U^t|$. This implies the claim since by construction $|P^t|\leq |U^t|/(1-\eps)$ at any time.

To see that, let us show by a simple induction that, for any two distinct $p,p'\in U^t$, there is no set $S\in \cF$ covering both $p$ and $p'$. 
Thus $OPT^t$ needs to include a distinct set for each element of $U^t$. The Random Cover subroutine applied to $X'$ never selects a pivot $p'$ that is covered by sets selected due to a previous pivot $p$. This implies that the property holds after the initialization step, where $X'=X$. 

Assume the property holds up to step $t\geq 1$, and suppose that at step $t+1$ an update happens at critical level $\ell$, involving elements $X'$. By inductive hypothesis and the properties of Random Cover, the claim holds for any pair of pivots $p,p'$ that are both contained in $X'$ or in its complement $X\setminus X'$. Furthermore by construction $X'$ is disjoint from any set $S$ that covers a pivot $p\in X\setminus X'$, hence $S$ cannot cover any pivot $p'\in X'$. 
\eat{
 By construction, the Random Cover subroutine applied to $X'$ never selects a pivot $p'$ that is covered by sets selected due to a previous pivot $p$. This implies that the property holds after the initialization step, where $X'=X$. Assume the property holds up to step $t\geq 1$, and consider step $t+1$. If the pivots are not updated during step $t$, there is nothing to show. Otherwise the algorithm will keep a proper prefix (in order of insertion) $\cF_{pref}^t$ of the sets $\cF^t$, and replace the remaining suffix $\cF_{suf}^t$ with $\cF_{suf}^{t+1}$ by means of the Random Cover subroutine. Let $X_{suf}^{t+1}$ be the undeleted elements at the beginning of step $t+1$ covered by $\cF_{suf}^{t+1}$, and $X_{pref}^{t+1}$ be the remaining undeleted elements at the same time. By the same argument as above, the property holds at step $t+1$ for any $p,p'\in X_{suf}^{t+1}$. It also holds for any $p,p'\in X_{pref}^{t+1}$ by inductive hypothesis. If $p\in X_{pref}^{t+1}$ and $p'\in X_{suf}^{t+1}$, by construction $p$ was selected before any pivot in $X_{suf}^t$. Thus the sets $\cup_{S\in \cF(p)}S$ and $X_{suf}^{t}$ (hence $X_{suf}^{t+1}\subseteq X_{suf}^{t}$) are disjoint.}
\end{proof}

\subsection{Analysis of the Amortized Update Time}

Our goal is to show that, after $t$ deletions, the 
expected time taken by the algorithm is $O\left(fn+\frac{f^2}{\eps^5}\cdot t\right)$. 
In particular, over a sequence of $n$ deletions, the expected amortized cost per deletion 
is $O\left(\frac{f^2}{\eps^5}\right)$. 

\eat{

\der{This invariant seems out of place. The next observation does not have to do with this invariant. Also, the invariant needs to be stated more precisely since the composition of $S$ might have changed because of element deletions. \fab{I suggest to remove invariant and claim this in the proofs where needed.}}
We start the analysis by noting that our algorithm maintains the following invariant during its execution.
\begin{invariant}\label{inv:uniform}
At any time of the algorithm, given a set $S$ in the current solution from which a pivot $p(S)$ was sampled, the relative position $i(S)$ of $p(S)$ in $S$ w.r.t. to the order in which elements of $S$ are deleted is a random variable uniformly distributed over $\{1,\ldots,|S|\}$. In particular, this distribution is not affected by the deletions that occur after the selection of $p(S)$.
\end{invariant}

}

\eat{

The following lemma shows that we can can focus on the cost of the update phases. 
\begin{lemma}
\label{lem:level-time}
    The time taken in one level fixing process at critical level $\ell$ is at most that in the 
    subsequent update phase up to a constant factor independent from $\ell$.
\end{lemma}
\begin{proof}
Recall that level fixing at level $\ell$ takes time at most $c\ell^2$ for some constant $c>0$. Since $\ell$ is the lowest critical level, $\ell-1$ is not critical. Therefore, there is at least one deleted pivot in 
level $\ell$. This implies that this update involves at least $2^\ell$ elements, thus having cost at least $c'2^{\ell}$ for some other constant $c'>0$. If $c'2^{\ell} \geq c \ell^2$ the claim follows. Otherwise $\ell$, and hence $c\ell^2$, is upper bounded by a universal constant. 
\end{proof}

}

Suppose we perform an update at critical level $\ell$. Let $P_i$ be the total set of pivots at level $i\leq \ell$ right before this update, of which $D_i$ denotes the set of deleted pivots. Let also $p_i=|P_i|$ and $d_i=|D_i|$. Recall that by Lemma~\ref{lma:update-runtime}, the total time taken for this update phase is $O(\sum_{i\leq \ell}f^2 p_i 2^i)$. We call a level $i\leq \ell$ \emph{charged} if $d_i\geq \frac{\eps}{2}p_i$ and \emph{uncharged} otherwise. We denote by ${\mathcal{L}}=(\ell_q,\ell_{q-1},\ldots,\ell_1)$ the decreasingly ordered sequence of charged levels $j\leq \ell$. Observe that, by the definition of critical level, $\ell_q=\ell$. Also let $D$ be the set of deleted pivots in the charged levels. The following lemma creates a useful mapping between $D$ and $P:=\cup_{i\leq \ell}P_i$.
\begin{lemma}\label{lem:matching}
There exists a $b$-matching $M$ between $D$ and $P$ such that:
    \begin{itemize}\itemsep0pt
        \item Each element of $P$ is matched to exactly one element of $D$ and each element of $D$ to at most $b = 2/\eps$ elements of $P$;
        \item If $d\in D$ is matched to $p\in P$, then $\ell(d)\geq \ell(p)$.
     \end{itemize}
\end{lemma}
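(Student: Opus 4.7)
The plan is to verify Hall's condition on a bipartite graph that encodes the desired matching. I would set up $G$ with left side $P$, right side consisting of $b=2/\eps$ indistinguishable copies of each $d\in D$, and draw an edge between $p\in P$ and each copy of $d$ exactly when $\ell(d)\geq \ell(p)$. A perfect matching saturating $P$ in $G$ is precisely a $b$-matching with the two required properties, so it suffices to prove Hall's condition $|N_G(S)|\geq |S|$ for every $S\subseteq P$.

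The key observation is that, because edges go only ``upward'' in level, $N_G(S)$ depends on $S$ only through $i^{*} := \min_{p\in S}\ell(p)$: it consists of all $b$ copies of every deleted pivot at a charged level $\geq i^{*}$. So Hall's condition reduces to a single family of inequalities
\[
b\cdot |D^{\geq i}|\;\geq\; |P^{\geq i}|\quad\text{for every } i\leq \ell,
\]
where $D^{\geq i}=\sum_{j\in {\mathcal{L}},\, j\geq i}d_j$ and $P^{\geq i}=\sum_{j=i}^{\ell}p_j$; indeed, for any $S\subseteq P$ we then have $|N_G(S)|=b\cdot |D^{\geq i^{*}}|\geq |P^{\geq i^{*}}|\geq |S|$.

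To establish the inequality, I would start from the criticality of $\ell$, which gives $\sum_{j=i}^{\ell}d_j\geq \eps \sum_{j=i}^{\ell}p_j$, and split the left-hand sum into its charged and uncharged parts. Each uncharged level $j$ satisfies $d_j<\tfrac{\eps}{2}p_j$ by definition, so the uncharged contribution is strictly less than $\tfrac{\eps}{2}|P^{\geq i}|$. Subtracting, the charged contribution (which equals $|D^{\geq i}|$) is at least $\tfrac{\eps}{2}|P^{\geq i}|$, and multiplying by $b=2/\eps$ delivers exactly what is needed.

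The one subtlety to handle cleanly is Hall's theorem in its asymmetric ``$b$-matching'' form; the copy trick converts the problem to a standard perfect-matching question on $P$, and the monotone level structure of the edge relation then automatically reduces the Hall condition on an arbitrary $S$ to the prefix case $P^{\geq i^{*}}$. After that, the only real content is the inequality above, which is a direct consequence of criticality together with the charged/uncharged dichotomy; I do not anticipate any serious obstacle beyond bookkeeping.
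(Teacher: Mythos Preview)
Your proof is correct and rests on the same key inequality as the paper's: for every $i\le\ell$, the number of deleted pivots at charged levels $\ge i$ is at least $\tfrac{\eps}{2}$ times the number of all pivots at levels $\ge i$, which follows exactly as you say from criticality together with the charged/uncharged dichotomy. The only difference is in how the matching is extracted. The paper also makes $2/\eps$ copies of each $d\in D$, but then sorts both the copy set $D'$ and $P$ in non-increasing order of level and matches greedily; it verifies the level condition by contradiction using (essentially) the same prefix inequality. You instead invoke Hall's theorem on the same copy-expanded bipartite graph. The two routes are equivalent here: the upward-closed structure of the neighborhoods is precisely what makes the greedy sorting construction succeed and what collapses Hall's condition to the prefix inequality. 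Your argument is a clean existential variant of the paper's explicit one, and arguably a bit more streamlined.
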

\begin{proof}
Let us define $\ell_{0}=0$. For every $k=1,\ldots,q$, by definition  
\begin{equation}\label{eqn:lem:matching}
\sum_{j=\ell_{k-1}+1}^{\ell_k-1}d_j\leq \frac{\eps}{2}\sum_{j=\ell_{k-1}+1}^{\ell_k-1}p_j.
\end{equation}
Let us define ${\mathcal{P}}_{\ell_k}:=\cup_{j=\ell_{k-1}+1}^{\ell_k}P_j$. Therefore, for every $h=1,\ldots,q$, we have
\begin{align}
\sum_{k=h}^{q}d_{\ell_k} 
& =\sum_{j=\ell_{h-1}+1}^{\ell}d_j-\sum_{k=h}^{q}\sum_{j=\ell_{k-1}+1}^{\ell_k-1}d_j \nonumber\\
& \overset{\text{$\ell$ critical}}{\geq} \sum_{j=\ell_{h-1}+1}^{\ell}\eps  p_j-\sum_{k=h}^{q}\sum_{j=\ell_{k-1}+1}^{\ell_k-1}d_j\nonumber\\
& \overset{\eqref{eqn:lem:matching}}{\geq} \sum_{j=\ell_{h-1}+1}^{\ell}\eps p_j-\frac{\eps}{2}\sum_{k=h}^{q}\sum_{j=\ell_{k-1}+1}^{\ell_k-1}p_j \nonumber\\
& \geq  \frac{\eps}{2}\sum_{j=\ell_{h-1}+1}^{\ell}p_j = \frac{\eps}{2}\sum_{k=h}^{q}|{\mathcal{P}}_{\ell_k}|.\label{eqn:refinedMatching}
\end{align}


Let us replace each pivot $d\in D_{j}$, $j\in \cL$, with $2/\eps$ copies, and let us call the new set $D'_{j}$. Each copy of a pivot inherits the level of the original element. Let us sort $D':=\cup_{j\in \cL}D'_j$ in non-increasing order of level (breaking ties arbitrarily), and similarly sort $P$. Now, for every element $p\in P$ according to this order, we match $p$ with the first unmatched element $d'\in D'$. The $b$-matching is obtained by collapsing the copies in $D'$ of the same pivot in $D$. Observe that this allows us to match all elements of $P$ since
$$
|D'|=\frac{2}{\eps}|D|=\frac{2}{\eps}\sum_{k=1}^{q}d_{\ell_k}\overset{\eqref{eqn:refinedMatching}\text{ with $h=1$}}{\geq} \frac{2}{\eps}\cdot \frac{\eps}{2}\sum_{k=1}^{q}|{\mathcal{P}}_{\ell_k}|=|P|.
$$
It remains to show that the condition on the levels is satisfied. Suppose there exists some $d'\in D'$ matched to $p\in P$ with $\ell_{h-1}=\ell(d')<\ell(p)$. By construction this implies that 
$$
\frac{2}{\eps}\sum_{k=h}^{q}d_{\ell_k}=\sum_{k=h}^{q}|D'_{\ell_k}|<\sum_{k=h}^{q}|\cP_{\ell_k}|,
$$ 
which contradicts \eqref{eqn:refinedMatching}.
\end{proof}

Consider a sequence of $t$ deletions, and let $T$ be time right before the $(t+1)$-st deletion occurs (or the end of the execution if no such deletion exists). We wish to bound the expected total update time till time $T$ as a multiple of $t$. 
To that aim, we need a more global notation. Suppose that level $i$ is updated $q_i$ times in total, and let $P^j_i$ be the total set of pivots in the $j$-th such update. 
We use $P(i)$ to denote the multiset of pivots given by the union of the sets $P^j_i$, and define $p(i):=|P(i)|$. The pivots of type $P^j_i$ where level $i$ is charged on the $j$-th update are called charged and denoted $CH(i)$, $ch(i):=|CH(i)|$. 
The charged deleted pivots at level $i$ are denoted by $D(i)$, $d(i):=|D(i)|$.

We call level $i$ \emph{globally-charged} iff $ch(i)\geq \frac{\eps}{\fab{4}}p(i)$, and \emph{globally-uncharged} otherwise. We also let $GC$ denote the (random) set of globally-charged levels. We next show that, in order to bound the total update time, we can focus on globally-charged levels only.

\begin{lemma}\label{lem:timeGC}
The expected running time of the algorithm is
$$
O(fn)+O\left(\sum_{i=0}^{L} \pr{i\in GC}\cdot \ex{\frac{f^2}{\eps}2^i d(i) \;\middle\vert\; i\in GC}\right).
$$
\end{lemma}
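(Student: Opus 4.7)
The plan is to convert Lemma~\ref{lma:update-runtime} into a deterministic, per-realization bound that, apart from the unavoidable $O(fn)$ initialization cost, depends only on the charged deleted pivots at globally-charged levels; the expectation in the statement then follows immediately from the identity $\ex{X\cdot \mathbf{1}[A]}=\pr{A}\cdot \ex{X\mid A}$.

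First, I would rewrite the cost of a single update phase in terms of its deleted pivots. Fix an update at critical level $\ell$ and let $(P_i,D_i)_{i\leq\ell}$, $\cL$, $P=\cup_{i\leq\ell}P_i$, and $D$ be as in Lemma~\ref{lem:matching}. The update cost from Lemma~\ref{lma:update-runtime} is $O(f^2\sum_{i\leq \ell}p_i 2^i)=O\bigl(f^2\sum_{p\in P}2^{\ell(p)}\bigr)$. Feeding the $b$-matching $M$ from Lemma~\ref{lem:matching} (with $b=2/\eps$), and using that each matched pair $(d,p)$ satisfies $\ell(d)\geq\ell(p)$ while each $d\in D$ is matched to at most $2/\eps$ pivots, I would conclude
$$
\sum_{p\in P}2^{\ell(p)} \;\leq\; \frac{2}{\eps}\sum_{d\in D}2^{\ell(d)} \;=\; \frac{2}{\eps}\sum_{i\in \cL}d_i\,2^i.
$$
Summing over all update phases and adding the $O(fn)$ initialization cost yields the deterministic global bound
$$
T_{\mathrm{tot}} \;=\; O(fn) \;+\; O\!\left(\frac{f^2}{\eps}\sum_{i=0}^{L}2^i\,d(i)\right),
$$
where $d(i)$ is the aggregated number of charged deleted pivots at level $i$ over all updates.

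Second, I would argue that one may drop globally-uncharged levels at the cost of a factor $2$. The very same per-update matching inequality, summed globally, gives $\sum_i 2^i p(i)\leq \frac{2}{\eps}\sum_i 2^i d(i)$. By the definition of $GC$, for every $i\notin GC$ we have $ch(i)<\frac{\eps}{4}p(i)$, and always $d(i)\leq ch(i)$ since $D(i)\subseteq CH(i)$. Combining these inequalities,
$$
\sum_{i\notin GC}2^i d(i) \;\leq\; \frac{\eps}{4}\sum_{i\notin GC}2^i p(i) \;\leq\; \frac{\eps}{4}\cdot\frac{2}{\eps}\sum_i 2^i d(i) \;=\; \tfrac{1}{2}\sum_i 2^i d(i),
$$
so $\sum_i 2^i d(i)\leq 2\sum_{i\in GC}2^i d(i)$ holds deterministically.

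Plugging this into the bound for $T_{\mathrm{tot}}$ and taking expectations gives
$$
\ex{T_{\mathrm{tot}}} \;=\; O(fn) + O\!\left(\ex{\tfrac{f^2}{\eps}\sum_{i\in GC}2^i d(i)}\right) \;=\; O(fn) + O\!\left(\sum_{i=0}^{L}\pr{i\in GC}\cdot \ex{\tfrac{f^2}{\eps}\,2^i d(i)\,\Big|\,i\in GC}\right),
$$
as claimed. The step I expect to be the most delicate is the first one: Lemma~\ref{lem:matching} is phrased as a $b$-matching and not as a numerical inequality, so one has to be careful that each pivot on the $P$ side contributes exactly once while the $2/\eps$ overhead arises from re-use on the $D$ side (and not the other way around), and that the weights $2^{\ell(\cdot)}$ increase when following the matching from $P$ to $D$. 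Everything after that reduction is combinatorial bookkeeping together with a single application of the indicator/conditional-expectation identity.
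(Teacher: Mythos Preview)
Your proof is correct and in fact cleaner than the paper's. You and the paper both start from the same raw material (Lemma~\ref{lma:update-runtime} and the $b$-matching of Lemma~\ref{lem:matching}), but diverge in how you eliminate the globally-uncharged levels.

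The paper uses an inductive token-transfer argument: it gives each pivot in $P(i)$ a budget of $f^2 2^i$ tokens, transfers these via $M$ to charged deleted pivots, and then, level by level, pushes the tokens sitting at globally-uncharged levels strictly upward (through a second auxiliary $b$-matching $M^i$ within each such level, and then again through $M$). An induction shows that when this process terminates, every charged deleted pivot at a globally-charged level $i$ holds at most $\tfrac{4f^2}{\eps}2^i$ tokens. This yields a \emph{per-pivot} charging bound.

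Your route is purely arithmetic. You first collapse the matching into the numerical inequality $\sum_i 2^i p(i)\le \tfrac{2}{\eps}\sum_i 2^i d(i)$ (correctly noting the direction of the $b$-matching), which already gives the intermediate bound $T_{\mathrm{tot}}=O(fn)+O(\tfrac{f^2}{\eps}\sum_i 2^i d(i))$. Then, using $d(i)\le ch(i)<\tfrac{\eps}{4}p(i)$ for $i\notin GC$ together with the same global inequality, you show $\sum_{i\notin GC}2^i d(i)\le \tfrac12\sum_i 2^i d(i)$, so dropping non-$GC$ levels costs only a constant factor. This avoids the inductive token-pushing entirely. The only thing you lose relative to the paper is the finer per-pivot conclusion (which the paper later reuses verbatim in the fully-dynamic analysis), but for Lemma~\ref{lem:timeGC} itself your aggregate argument suffices and is shorter.
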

\begin{proof}
Excluding the initialization cost of $O(fn)$ and by Lemma \ref{lma:update-runtime}, we can focus on bounding $O(\sum_i f^2 p(i) 2^i)$. We use the following token argument to upper bound the latter cost. We provide $f^2 2^i$ tokens to each pivot $p\in P(i)$, where each token can pay for a large enough constant amount of work. Then we transfer these tokens to charged deleted pivots in globally-charged levels so that all tokens are transferred and each charged deleted pivot at level $i$ is charged with at most $\frac{4f^2}{\eps}2^i$ tokens.

We next describe the transfer process. Let $M$ denote the $b$-matchings in Lemma \ref{lem:matching}. In particular, each pivot $p\in P(i)$ is matched with some charged deleted pivot $M(p)$ at no lower level, and each charged deleted pivot is matched with at most $2/\eps$ pivots. We remark that uncharged pivots $p$ have their $M(p)$ at a strictly higher level by construction. First of all, each pivot $p\in P(i)$ transfers its token\fab{s} to the corresponding charged deleted pivot according to $M$. Note that at this point each charged deleted pivot at level $i$ owns at most $\frac{2f^2}{\eps}2^i$ tokens.  

Next we proceed in increasing order of level $i$. For a given level $i$, each charged deleted pivot $d\in D(i)$ owns the tokens originally owned by $d$ and possibly tokens transferred from lower levels. If $i\in GC$ we do nothing. Otherwise (i.e., $i\notin GC$), we define a $b$-matching $M^i$ where each charged deleted pivot $d\in D(i)$ is matched with $\frac{2}{\eps}$ distinct uncharged pivots in $P(i)$ so that no uncharged pivot in $P(i)$ is matched twice. Note that this is possible since, by definition of globally-uncharged level, the uncharged pivots in $P(i)$ are at least 
$$
p(i)-ch(i)
\geq \left(1-\frac{\eps}{4}\right)p(i)
\geq \left(1-\frac{\eps}{4}\right)\frac{4}{\eps}ch(i)
\geq \frac{2}{\eps}ch(i), 
$$
where we assumed $\eps\leq 2$ w.l.o.g.
Now $d$ transfers an $\frac{\eps}{2}$-fraction of its tokens to each corresponding pivot in $M^i(d)$. Finally each matched uncharged pivot $p\in P(i)$ transfers the received tokens to the corresponding deleted pivot $M(p)$ in the global $b$-matching $M$. Observe that $M(p)$ must be at strictly higher level than $p$, hence the process is well-defined.

Clearly at the end of the process all the tokens are transferred to charged deleted pivots in globally-charged levels and no token is left. We next prove by induction that, at the end of iteration $i$ (where level $i$ is considered), the number of tokens charged to each $d\in D(i)$ is at most $\frac{4}{\eps}f^2 2^i$.
The claim follows. 

The base of the induction $i=0$ is trivially true. Indeed, if $i\notin GC$ all the tokens of $d$ are transferred to higher levels. Otherwise $d$ can only be charged with the starting number of tokens, which is at most $\frac{2f^2}{\eps}$ since there are no lower levels that can transfer tokens to level $i$. Note that the number of tokens that $d$ is charged with does not change in the rest of the token transfer process.

Next consider a level $i>0$, and assume the claim is true for levels $i-1$ and lower. For any $d\in D(i)$, again the claim holds trivially if $i\notin GC$. Otherwise, $d$ initially has up to $\frac{2f^2}{\eps}2^i$ tokens. Furthermore, $d$ can receive extra tokens from up to $\frac{2}{\eps}$ pivots $p$ of strictly lower levels. Each such $p$ at level $\ell\leq i-1$ transfers to $d$ an $\frac{\eps}{2}$-fraction of the tokens of some charged deleted pivot of level $\ell$. By the inductive hypothesis, the total number of tokens received by $d$ at the end of iteration $i$ is at most 
$$
\frac{2}{\eps}f^2 2^{i}+\frac{2}{\eps}\cdot \frac{\eps}{2} \cdot \frac{4}{\eps}f^2 2^{i-1}=\frac{4}{\eps}f^2 2^{i}.
$$
Again, the number of tokens that $d$ is charged with does not change in the rest of the token transfer process.
\end{proof}

Based on the above lemma, what remains to show is a bound for $\ex{2^i d(i) \;\middle\vert\; i\in GC}$. 
We bound this in terms of $\ex{t(i) \;\middle\vert\; i\in GC}$, where $t(i)$ is the (random) number of deletions that happen at level $i$. 
Instead of considering $t(i)$ directly, we rather focus on the following quantity. For a pivot $p(S)$ sampled from some set $S$ (considering only the uncovered elements at that time), let $i(S)$ be the relative position of $p(S)$ in $S$ w.r.t. the deletion order. We remark that $i(S)$ is uniformly distributed in $\{1,\ldots,|S|\}$. Define $x(i):=\sum_{S:p(S)\in D(i)}i(S)$.  Notice that deterministically $x(i)\leq t(i)$ since all the elements that appear in a set $S$ no later than the respective pivot $p(S)$ in the deletion order are deleted assuming $p(S)$ is deleted.

Let us also condition on $p(i)=p$ for some fixed value $p$, and consider $\ex{x(i) \;\middle\vert\; i\in GC,p(i)=p}$. We now relate this quantity to another random process. Suppose there is an adversary that defines a collection of exactly $p$ sets $\cS$ (possibly with repetition), where each set $S\in \cS$ has size $|S|\in [2^i,2^{i+1})$, and a deletion sequence over the elements of the sets. Now we sample a pivot $\tilde{p}(S)$ uniformly at random in each set $S\in \cS$, and let $\tilde{i}(S)$ be the relative position of the pivot $\tilde{p}(S)$ in $S$ w.r.t. the deletion sequence. The adversary is informed about the values $\tilde{i}(S)$. The adversary chooses a subcollection $\cS'\subseteq \cS$ of size at least $\frac{\eps^2}{8} p$, and computes $\tilde{x}_p(i)=\sum_{S\in \cS'}\tilde{i}(S)$. The adversary makes these choices in order to minimize $\ex{\tilde{x}_p(i)}$. 
\begin{lemma}\label{lem:domination}
$\ex{\tilde{x}_p(i)} \leq \ex{x(i) \;\middle\vert\; i\in GC,p(i)=p}$.
\end{lemma}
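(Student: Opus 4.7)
The plan is to upper-bound the adversary's minimum $\ex{\tilde{x}_p(i)}$ by exhibiting a single randomized adversary strategy whose expected value is at most $\ex{x(i)\mid i\in GC, p(i)=p}$. Since the infimum over deterministic strategies equals the one over randomized strategies, producing any such randomized strategy will suffice.

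The strategy $\Sigma$ I would use is the following. First, the adversary samples a realization of the real algorithm conditionally on the event $E:=\{i\in GC, p(i)=p\}$. From this sample she reads off (a) the multiset $\{S_1,\dots,S_p\}$ of level-$i$ sets produced by the algorithm, each of cardinality in $[2^i,2^{i+1})$, and (b) the restriction of the deletion sequence to their union; she declares these to be her collection $\cS$ and deletion sequence. Once the $\tilde{i}(S)$'s are revealed by the uniform pivot samples $\tilde{p}(S)$, she picks $\cS'$ to be the $\lceil\eps^2 p/8\rceil$ sets with smallest $\tilde{i}(S)$, which is the optimal choice subject to $|\cS'|\ge \eps^2 p/8$.

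To bound the expected value of $\Sigma$ I would use a coupling that identifies her uniform pivot samples with the algorithm's uniform pivot choices, so that $\tilde{i}(S)=i(S)$ for every $S_j$. On the event $E$, the charge bookkeeping---combining the globally-charged condition $ch(i)\ge \tfrac{\eps}{4}p$ with the per-update condition $d^j_i\ge \tfrac{\eps}{2}p^j_i$---yields $|D(i)|\ge \tfrac{\eps^2}{8}p$, so $D(i)$ itself is a feasible (but in general suboptimal) choice for $\cS'$. Hence on the coupled event,
\[
\tilde{x}_p(i)\;\le\;\sum_{S\in D(i)} \tilde{i}(S)\;=\;\sum_{S:\,p(S)\in D(i)} i(S)\;=\;x(i).
\]
Averaging over the sampled skeleton and the coupled pivot randomness would then yield $\ex{\tilde{x}_p(i)^{\Sigma}}\le \ex{x(i)\mid i\in GC, p(i)=p}$, completing the proof.

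The hard part will be making the coupling rigorous. The adversary's hypothetical game produces i.i.d.\ uniform $\tilde{i}(S_j)$'s by definition, so on the real side I need the joint law of $(i(S_1),\dots,i(S_p))$, conditional on the sampled skeleton and on $E$, to also be i.i.d.\ uniform on the respective sets $\{1,\dots,|S_j|\}$. The invariant that each pivot is picked uniformly in its set gives the right marginals, but the algorithm's later decisions---when the next update fires, which sets subsequently enter level $i$---depend on the deletion statuses of prior level-$i$ pivots and hence indirectly on the $i(S_j)$'s. My proposed fix is a deferred-decision argument: reveal each $i(S_j)$ only at the moment its value first influences the algorithm's next step, and let the ``skeleton'' record exactly the algorithmic history up to those revelation times. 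Independence and uniformity would then follow because each $i(S_j)$, at its moment of revelation, is still uniform on its admissible range given the skeleton-so-far.
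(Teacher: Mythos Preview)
Your overall approach---exhibit a randomized adversary that mimics the algorithm and couple its pivot samples with the algorithm's so that $\tilde x_p(i)\le x(i)$ pointwise---is exactly what the paper does. The paper's proof is in fact terser and less careful than yours: it simply asserts ``by coupling, we can assume that the sampled pivots in $\cS$ are precisely the pivots $P(i)$,'' takes $\cS'$ to be exactly the charged deleted pivots (so $\tilde x_p(i)=x(i)$ rather than merely $\le$), verifies $|D(i)|\ge \tfrac{\eps}{2}\,ch(i)\ge \tfrac{\eps^2}{8}\,p$, and concludes by optimality of the adversary. It does not discuss the conditional law of the ranks at all, so the subtlety you flag is one the paper simply leaves implicit.

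That said, your proposed deferred-decision patch does not close the gap. The difficulty is not only that later sets $S_{j+1},S_{j+2},\dots$ depend on earlier ranks---sequential revelation does handle that---but that you are conditioning throughout on the \emph{global} event $E=\{i\in GC,\ p(i)=p\}$. This event is a function of the entire vector $(i(S_1),\dots,i(S_p))$: which level-$i$ pivots are deleted before each update phase determines which phases have level $i$ charged, and hence both $ch(i)$ and $p(i)$. Revealing $i(S_j)$ only when it first influences the algorithm does nothing to neutralize conditioning on a downstream event; once you restrict to sample paths in $E$, the law of $i(S_j)$ given the skeleton-so-far need no longer be uniform on $\{1,\dots,|S_j|\}$. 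So the product-of-uniforms joint law your coupling needs is not delivered by deferred decisions alone, and both your write-up and the paper's leave this point unresolved.
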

\begin{proof}
We use a coupling argument. Intuitively, the adversary can mimic the behavior of any execution of our decremental algorithm. In more detail, consider any execution of the decremental algorithm such that $i\in GC$ and $p(i)=p$. We couple the behavior of the adversary with this execution as follows. The adversary selects the same deletion order as in the input, and as collection $\cS$ precisely the sets that appear at level $i$ right before each update phase that involves that level (hence $|\cS|=p$). By coupling, we can assume that the sampled pivots in $\cS$ are precisely the pivots $P(i)$ of level $i$ in the execution of the algorithm. The collection $\cS'$ is given by the sets $S\in \cS$ such that the corresponding pivots are deleted and charged in the considered execution of the algorithm. Observe that $d(i)\geq \frac{\eps}{2}ch(i)\geq \frac{\eps^2}{8}p(i)=\frac{\eps^2}{8}p$, hence the constraint $|\cS'|\geq \frac{\eps^2}{8}p$ is satisfied. One has $\tilde{x}_p(i)=x(i)$ deterministically in the above construction, hence $\ex{\tilde{x}_p(i)} = \ex{x(i)}$. The claim follows since the adversary makes the optimal choices in order to minimize $\ex{\tilde{x}_p(i)}$.
\end{proof}
\begin{lemma}\label{lem:boundDeletions}
$\ex{\tilde{x}_p(i)} \geq \frac{\eps^4}{1024}2^ip$.
\end{lemma}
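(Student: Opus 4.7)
The plan is to observe that the adversary, aiming to minimize $\ex{\tilde{x}_p(i)}$, will select $\cS'$ of the minimum permitted size $s:=\lceil (\eps^2/8)p\rceil$ consisting of the sets with the $s$ smallest values of $\tilde{i}(S)$, and then to lower-bound the resulting sum by a simple threshold argument. The key point is that each $\tilde{i}(S)$ is uniform on $\{1,\ldots,|S|\}$ with $|S|\ge 2^i$, so it is unlikely to be much smaller than $2^i$, and hence only a small number of sets can have tiny $\tilde{i}(S)$ values in expectation.

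Concretely, set $\alpha:=\eps^2/8$ and fix the threshold $T:=\alpha\cdot 2^i/2=\eps^2\cdot 2^i/16$. For each $S\in\cS$, uniformity gives $\Pr[\tilde{i}(S)\le T]\le T/|S|\le T/2^i=\alpha/2$. Let $N_T:=|\{S\in\cS:\tilde{i}(S)\le T\}|$; by linearity of expectation, $\ex{N_T}\le \alpha p/2$. Now in any realization, since $|\cS'|=s\ge \alpha p$ and at most $N_T$ of those selected sets can satisfy $\tilde{i}(S)\le T$, at least $(\alpha p-N_T)^+$ members of $\cS'$ must contribute a value strictly greater than $T$. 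Hence
$$\tilde{x}_p(i)\ \ge\ T\cdot(\alpha p-N_T)^+\ \ge\ T\cdot(\alpha p-N_T),$$
and taking expectations,
$$\ex{\tilde{x}_p(i)}\ \ge\ T\cdot(\alpha p-\ex{N_T})\ \ge\ T\cdot\frac{\alpha p}{2}\ =\ \frac{\alpha^2}{4}\cdot 2^i p\ =\ \frac{\eps^4}{256}\cdot 2^i p\ \ge\ \frac{\eps^4}{1024}\cdot 2^i p,$$
as required.

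I do not anticipate a real obstacle here; the argument uses only linearity of expectation (in particular, no independence across sets is needed). The only bookkeeping concern is the degenerate regime $T<1$, which arises for very small $i$: there, $N_T=0$ deterministically and $\tilde{i}(S)\ge 1>T$ for every $S$, so $\tilde{x}_p(i)\ge s\ge \alpha p$; a direct check shows that $T<1$ forces $2^i<2/\alpha$, and hence $\eps^4\cdot 2^i/1024<\alpha$, so $\alpha p$ already exceeds the target bound in that regime.
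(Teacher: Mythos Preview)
Your proof is correct and rests on the same threshold idea as the paper's: call a set ``bad'' if $\tilde{i}(S)$ falls below a threshold of order $\eps^2\cdot 2^i$, observe that each set is bad with probability $O(\eps^2)$ (since $\tilde{i}(S)$ is uniform on $\{1,\ldots,|S|\}$ with $|S|\ge 2^i$), and conclude that most of the $\eps^2 p/8$ sets selected by the adversary must be good and therefore contribute $\Omega(\eps^2\cdot 2^i)$ each. The only difference is in how the last step is executed. The paper bounds the number $b(i)$ of bad sets in expectation and then applies Markov's inequality to get $\Pr[b(i)\ge \eps^2 p/16]\le 1/2$; on the complementary event at least half of the selected sets are good, so $\tilde{x}_p(i)\ge \eps^4 2^i p/512$ deterministically, yielding $\eps^4 2^i p/1024$ in expectation. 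You instead use the deterministic inequality $\tilde{x}_p(i)\ge T(\alpha p - N_T)$ and take expectations directly via linearity, bypassing Markov entirely. Your route is slightly more elementary and in fact gives the better constant $\eps^4/256$; the paper's Markov step, on the other hand, is what gets reused later for the high-probability analysis (where Markov is replaced by a Chernoff bound).
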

\begin{proof}
Consider the collection $\cS$ of $p$ sets and the deletion order chosen by the adversary. Once the $p$ pivots $\tilde{P}(i)$ are fixed, the best strategy for the adversary is to choose the sub-collection $\cS'$ of precisely $\frac{\eps^2}{8}p$ sets $S$ with smallest $\tilde{i}(S)$ (breaking ties arbitrarily). It remains to bound the expected value of $\tilde{x}_p(i)=\sum_{S\in \cS'}\tilde{i}(S)$. 

We say that a set $S\in \cS$ is \emph{bad} if $\tilde{i}(S)\leq \frac{\eps^2}{32} 2^{i}$ and \emph{good} otherwise. We let $b(i)$ and $g(i)$ be the number of bad and good sets, respectively. Observe that each set is bad independently with probability at most $\frac{\eps^2}{32}$, hence $\ex{b(i)}\leq \frac{\eps^2}{32}p$. By Markov's inequality, 
$$\pr{b(i)\geq \frac{\eps^2}{16}p}\leq \frac{1}{2}.$$ 
Given the event ${\mathcal{E}}=\left\{b(i)<\frac{\eps^2}{16}p\right\}$, one has that at least one half of the $\frac{\eps^2}{8}p$ selected sets are good, in which case deterministically 
$$\tilde{x}_p(i)\geq \frac{\eps^2}{16}p\cdot \frac{\eps^2}{32} 2^{i}=\frac{\eps^4}{512}2^ip.$$ 
We can conclude that
$$
\ex{\tilde{x}_p(i)}
\geq 
\pr{{\mathcal{E}}}\cdot \ex{\tilde{x}_p(i) \;\middle\vert\; {\mathcal{E}}}
\geq \frac{1}{2}\cdot \frac{\eps^4}{512}2^ip.\qedhere
$$
\end{proof}

Finally, we put the above lemmas together to obtain the desired bound.

\begin{lemma}\label{lem:decrementalTime}
The expected running time of the algorithm in the decremental case is $O\left(fn+\frac{f^2}{\eps^5}\cdot t\right)$, where $t$ is the number of deletions.
\end{lemma}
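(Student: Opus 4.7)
The plan is to simply chain together Lemmas~\ref{lem:timeGC}, \ref{lem:domination}, and \ref{lem:boundDeletions}, using the deterministic inequalities $d(i)\leq p(i)$ and $x(i)\leq t(i)$ (where $t(i)$ is the number of deletions that happen at level $i$) to pass from quantities about pivots to quantities about total deletions. Setting aside the initialization cost of $O(fn)$ from Lemma~\ref{lem:timeGC}, the work that remains is to bound $\sum_{i=0}^{L}\Pr[i\in GC]\cdot\mathbb{E}\!\left[\tfrac{f^2}{\eps}2^i d(i)\;\middle\vert\; i\in GC\right]$ by $O(f^2 t/\eps^5)$.

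The first step would be to further condition on $p(i)=p$ for an arbitrary value $p$, so that $d(i)\le p(i)=p$ holds deterministically and hence $2^i d(i)\le 2^i p$. Next, I would combine Lemma~\ref{lem:boundDeletions} with Lemma~\ref{lem:domination} to write $\frac{\eps^4}{1024}2^i p\leq \mathbb{E}[\tilde{x}_p(i)]\leq \mathbb{E}[x(i)\mid i\in GC, p(i)=p]$, and then invoke the deterministic bound $x(i)\leq t(i)$ to obtain
\[
2^i p\leq \frac{1024}{\eps^4}\,\mathbb{E}[t(i)\mid i\in GC, p(i)=p].
\]
Combining the two displays yields $\mathbb{E}[2^i d(i)\mid i\in GC, p(i)=p]\leq \frac{1024}{\eps^4}\mathbb{E}[t(i)\mid i\in GC, p(i)=p]$, and averaging over $p$ (conditioned on $i\in GC$) removes the conditioning on $p(i)$ and gives
\[
\mathbb{E}[2^i d(i)\mid i\in GC]\leq \frac{1024}{\eps^4}\,\mathbb{E}[t(i)\mid i\in GC].
\]

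Plugging this back into Lemma~\ref{lem:timeGC} and pulling out $\Pr[i\in GC]$, the bound telescopes into
\[
\frac{1024 f^2}{\eps^5}\sum_{i=0}^{L}\Pr[i\in GC]\cdot\mathbb{E}[t(i)\mid i\in GC]=\frac{1024 f^2}{\eps^5}\sum_{i=0}^{L}\mathbb{E}[t(i)\cdot \mathbb{1}[i\in GC]]\leq \frac{1024 f^2}{\eps^5}\cdot \mathbb{E}\!\left[\sum_{i=0}^{L} t(i)\right].
\]
Each deletion is assigned to a unique level (namely, the level of the pivot covering that element at the moment of deletion), so $\sum_i t(i)\leq t$ deterministically. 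Adding the $O(fn)$ initialization cost, the total expected running time is $O(fn+f^2 t/\eps^5)$, as required.

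The only non-routine point is the book-keeping around $\sum_i t(i)\leq t$: I would need to make sure that the way $t(i)$ is defined (the deletions used to charge $x(i)$ against $t(i)$ earlier in the section) indeed partitions the $t$ deletions across levels, i.e., that no single deletion is counted in $t(i)$ for two different $i$. Once this is spelled out, the rest of the proof is a mechanical substitution.
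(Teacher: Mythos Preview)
Your proposal is correct and follows essentially the same chain of inequalities as the paper: condition on $p(i)=p$, combine Lemmas~\ref{lem:domination} and~\ref{lem:boundDeletions} with $x(i)\le t(i)$ to get $2^i p \le \tfrac{1024}{\eps^4}\,\ex{t(i)\mid i\in GC,\,p(i)=p}$, use $d(i)\le p(i)$, average out $p$, plug into Lemma~\ref{lem:timeGC}, and finish with $\sum_i t(i)\le t$. The paper routes through $\ex{p(i)\mid i\in GC}$ as an intermediate whereas you bound $d(i)\le p$ before averaging, but this is cosmetic; your bookkeeping concern about $\sum_i t(i)\le t$ is handled by the fact that the sets $\cX(p)$ partition the currently covered elements, so each deletion lies at a unique level --- the paper simply asserts this identity as well.
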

\begin{proof}
Let us consider a given level $i$. One has 
\begin{align}
\ex{t(i) \;\middle\vert\; i\in GC, p(i)=p} 
& \geq \ex{x(i) \;\middle\vert\; i\in GC, p(i)=p} \nonumber\\ \overset{\text{Lem. \ref{lem:domination}}}{\geq} \ex{\tilde{x}_p(i)}
& \overset{\text{Lem. \ref{lem:boundDeletions}}}{\geq} \frac{\eps^4}{1024}2^i p.\label{eqn:time1}
\end{align}
Hence
\begin{align}
& \ex{t(i) \;\middle\vert\; i\in GC} 
= \sum_{p}\pr{p(i)=p \;\middle\vert\; i\in GC} \cdot \ex{t(i) \;\middle\vert\; i\in GC, p(i)=p}\nonumber\\ \overset{\eqref{eqn:time1}}{\geq} & 
\sum_{p}\pr{p(i)=p \;\middle\vert\; i\in GC} \cdot \frac{\eps^4}{1024}2^i p 
= \frac{\eps^4}{1024}2^i \cdot \ex{p(i) \;\middle\vert\; i\in GC} \nonumber\\
\geq & \frac{\eps^4}{1024} 2^i \cdot \ex{d(i) \;\middle\vert\; i\in GC}\label{eqn:time2}.
\end{align}
Now, we note that 
\begin{align*}
& \sum_{i=0}^{L}\pr{i\in GC}\cdot \frac{f^2}{\eps} 2^i \ex{d(i) \;\middle\vert\; i\in GC} 
\\ \overset{\eqref{eqn:time2}}{\leq} & \sum_{i=0}^{L}\pr{i\in GC}\cdot \frac{f^2}{\eps} 2^i \frac{1024}{\eps^4 2^i} \ex{t(i) \;\middle\vert\; i\in GC}\\
\leq & \sum_{i=0}^{L}\frac{1024 f^2}{\eps^5}  \ex{t(i)}=\frac{1024 f^2}{\eps^5} t.
\end{align*}
The lemma now follows from Lemma~\ref{lem:timeGC}.
\end{proof}

We summarize the results in the following theorem.
\begin{theorem}
\label{theorem:main-decremental}
Given an $\eps >0$, let $\Delta=\frac{f^2}{\eps^5}$. There exists a decremental algorithm for set cover that achieves an $f(1+\epsilon)$ approximation and takes $O(\Delta \cdot t)$ time in expectation over $t$ updates.
\end{theorem}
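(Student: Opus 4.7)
The theorem is essentially a clean restatement of the two main analytical results developed in the section, so the plan is to combine them and handle the small cosmetic issues that arise when converting $f/(1-\eps)$ into $(1+\eps)f$ and $O(fn+\frac{f^2}{\eps^5}t)$ into $O(\Delta\cdot t)$.

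For the approximation factor, I would invoke Lemma~\ref{lem:apxDec}, which shows that the algorithm maintains a solution of size at most $f/(1-\eps)$ times the optimum at every point in time. For $\eps\leq 1/2$ we have $1/(1-\eps)\leq 1+2\eps$, so running the algorithm with parameter $\eps':=\eps/2$ in place of $\eps$ yields an $f(1+\eps)$-approximation. Since $\eps'=\Theta(\eps)$, this replacement changes the eventual per-update cost $f^2/(\eps')^5$ only by a constant factor, which is harmless for the $O(\Delta\cdot t)$ claim. For very large $\eps$, the trivial $f$-approximation by rebuilding from scratch already suffices, so we may assume $\eps\leq 1/2$ without loss of generality.

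For the running time, I would apply Lemma~\ref{lem:decrementalTime}, which states that the expected total work over any prefix of $t$ deletions is $O(fn+\frac{f^2}{\eps^5}\cdot t)$. The only subtle point is the additive $fn$ coming from the initial call to the Random Cover subroutine. The plan is to fold this into the update term by charging the initialization to the deletions: any decremental execution eventually performs at most $n$ deletions, and the accounting in Lemmas~\ref{lma:update-runtime}--\ref{lem:decrementalTime} is per-deletion amortized, so the $O(fn)$ one-time cost adds an extra $O(f)=O(\Delta)$ charge per deletion, which is absorbed into the $O(\Delta\cdot t)$ bound. (Equivalently, if fewer than $n$ deletions occur, one simply distributes the $O(fn)$ cost uniformly, adding $O(f)\leq O(\Delta)$ per step.) Combining this with the rescaled approximation factor yields the theorem.

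I do not expect a genuinely hard step here: both the approximation bound and the update-time bound have already been proved, and only minor bookkeeping is required. The only mild obstacle is making the $(1+\eps)$ scaling and the absorption of the initialization cost into $O(\Delta\cdot t)$ explicit, and both are resolved by a constant-factor rescaling of $\eps$ and a straightforward amortization argument, respectively.
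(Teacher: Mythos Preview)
Your proposal is correct and matches the paper, which presents the theorem explicitly as a summary of Lemmas~\ref{lem:apxDec} and~\ref{lem:decrementalTime} without further argument; the rescaling $\eps'=\eps/2$ to convert $f/(1-\eps)$ into $(1+\eps)f$ is exactly the intended bookkeeping. One minor correction: your parenthetical claim that distributing $O(fn)$ over $t<n$ steps adds only $O(f)$ per step is off (it adds $O(fn/t)$); the absorption of $O(fn)$ into $O(\Delta t)$ genuinely requires $t=\Omega(n\eps^5/f)$, which is precisely the regime the paper targets when it states the amortized bound ``over a sequence of $n$ deletions.''
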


\section{The Fully Dynamic Set Cover Algorithm}
\label{sec:fullyDynamic}
In this section, we extend the algorithm for the decremental case to the fully dynamic case.
At any time $t$, let $A \subseteq X$ denote the elements that need to be covered; we call these the {\em active} elements. 
Our goal is to maintain a feasible set cover $\cF$ for the active elements $A$ and ensure that the cost of $\cF$ is at most $(1+\epsilon)f$ times that of an optimal solution.
At the beginning, $A=\emptyset$ and $\cF=\emptyset$. Elements are then inserted or deleted from $A$ in a fixed sequence, independent of the randomness of the algorithm. If an element is inserted and then gets deleted and reinserted, we treat the two insertions separately as two copies of the same element.

\subsection{The Algorithm}

We now describe the {\em update phases} where the algorithm changes its solution in response to the insertions and deletions of elements. The update phases are very similar to the decremental algorithm, but with a few critical changes. To describe the changes, we need to introduce some additional notation. Just like the decremental algorithm, the fully dynamic algorithm maintains a set of pivots $P$, and at any time, the solution $\cF$ can be completely specified by $P$ as follows: $\cF=\{S \mid S \ni p\}$. $S(p)$ denotes the set of elements from which a pivot $p \in P$ is chosen and $\cX(p)$ denotes the set of elements $p$ is accounted to cover at any point of time. If $|S(p)| \in [2^i, 2^{i+1})$, we say that pivot $p$ is a level-$i$ pivot, and denote $\ell(p) = i$. We call the sets $\{S \mid S \ni p, \ell(p)=i\}$ level-$i$ sets. In addition, we partition $X(p)$ into two subsets $Orig(p)$ and $Extra(p)$, that is $X(p)=Orig(p)\cup Extra(p)$ and $Orig(p)\cap Extra(p)=\emptyset$. An element $e \in Orig(p)$ is called an {\em original} element and an element $e \in Extra(p)$ is called an {\em extra} element. $Orig(p)$ consists of all elements that $p$ is accounted to cover at the time when $p$ was chosen to be a pivot and $\cF^+= \{S\in \cS': p\in S\}$ sets are included in the solution.  Thus, $S(p) \subseteq Orig(p)$. It is possible that $p$ is accounted to cover more elements due to later updates. Those elements are added to $Extra(p)$. Along with $P$, the algorithm also maintains sets $D$ and $U$ of deleted and undeleted pivots respectively, just like in the decremental algorithm. When an element $e\in P$ is deleted, we move $e$ from $U$ to $D$ but keep this element in $P$. Changes to $P$ are done only during an update phase which we describe below. 

\smallskip\noindent{\bf Insertion of a new active element.} Supppose a new element $e$ is inserted in the set of active elements $A$. If $\{S \mid S \ni e\} \cap \cF \neq \emptyset$, then $e$ is already covered by the current solution $\cF$. In this case, let  $S$ be the set containing $e$ at the highest level breaking ties arbitrarily. If $p \in P \cap S$ denotes the pivot in $S$, then we insert $e$ in $Extra(p)$ and $X(p)$, and $\cF$ remains unchanged. Otherwise, $e$ is not covered by the current solution $\cF$. In this case, we include $e$ as a level-$0$ pivot and set $S(e)=\{e\}$, $X(e)=Orig(e)=\{e\}$. We update $\cF=\cF \cup \{S \mid S \ni e\}$. 
We also update the sets $P$ and $U$ to include $e$. 

\smallskip\noindent{\bf Deletion of an existing active element.}
When an element $e$ is deleted from the set of active elements $A$, we mark $e$ as deleted from the sets $\{S \mid S \ni e\} \cap \cF$. If $e\in P$, we move $e$ from $U$ to $D$. By doing so, if $|D| > \epsilon \cdot |P|$, then we say that an update phase has been triggered, and perform the following additional steps.

\smallskip\noindent{\bf Update Phase.}
First, we fix a critical level $\ell$ using the {\em level fixing process} of the decremental algorithm. Having fixed this level $\ell$,
we discard all sets $\cF(p)$ from $\cF$ that were added by pivots $p$ at levels $\ell$ or lower, i.e., where $\ell(p) \leq \ell$. Correspondingly, these pivots $p$ are also removed from $P$ and from either $D$ or $U$ depending on whether they are deleted or undeleted. As a result, a set of active elements become uncovered in $\cF$; this set is denoted $X'$. 

Next, the update phase has two steps, a {\em movement step} and a {\em covering step}, to cover the elements in $X'$.

\noindent{\bf Movement step:} For each element $e \in X'$, we check if there is a set $S \in \cF$ containing $e$ at a level $\ell' > \ell$. If yes, we select a set $S \ni e$ at the highest level (breaking ties arbitrarily). If $p=P \cap S$, then $e$ is added to $Extra(p)$ and $X(p)$.  

\noindent{\bf Covering step:} Let $Y' \subseteq X'$ denote the elements left uncovered after the movement step. We now run the Random Cover subroutine on the instance $(Y', \cS')$ induced by $Y'$ and add the resulting sets $\cF'$ to the overall solution $\cF$. We say that levels $\ell$ and below have been updated in the current update phase. The newly selected pivots are added to $P$ and $U$. For every newly chosen pivot $p \in Y' \cap P$, if $S(p) \in [2^i, 2^{i+1})$, then pivot $p$ is a level-$i$ pivot and we include $\{S \mid S \ni p\}$ at level $i$. Note that it is possible that $i > \ell$ due to newly inserted elements. Also note that all the elements of $Y'$ now become original elements after the covering step. 

The next lemma, shown in Section~\ref{sec:running-time}, establishes the time taken to implement the above algorithm.
\begin{lemma}
\label{lma:update-dynamic}
The above algorithm for the insertion of a new element, or the deletion of an existing element that does not trigger an update phase, takes $O(f)$ time. The time complexity of an update phase is $O(f|X'|)$.
\end{lemma}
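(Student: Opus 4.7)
The plan is to handle the three cases separately, bounding the work of a natural implementation that maintains the following data structures: for every element $e$, the list of the at most $f$ sets containing $e$; for every set $S\in \cS$, a flag recording whether $S\in \cF$ together with the level of its owning pivot; for every pivot $p\in P$, its status ($U$ or $D$), its level $\ell(p)$, and the membership $\cX(p)=Orig(p)\cup Extra(p)$ stored as linked lists; and for every level $i$, counters $|P_i|$, $|D_i|$. With these structures, set-membership checks in $\cF$, set-level lookups and list insertions are all $O(1)$.

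For the insertion of a new active element $e$, the algorithm scans the at most $f$ sets containing $e$, checking whether any is in $\cF$ and tracking the highest-level such set, at a cost of $O(f)$. If some such set is found, appending $e$ to $Extra(p)$ and $\cX(p)$ for the corresponding pivot is $O(1)$; otherwise, declaring $e$ a level-$0$ pivot, initializing $S(e),Orig(e),\cX(e)$, and inserting the at most $f$ sets containing $e$ into $\cF$ is $O(f)$. Updating $P, U$ and the counters is $O(1)$. Similarly, a deletion of $e$ that does not trigger an update phase iterates over the at most $f$ sets containing $e$ to mark $e$ deleted in them, optionally moves $e$ from $U$ to $D$ (if $e\in P$), updates $|D|$, and tests $|D|>\eps|P|$, all in $O(f)$.

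For an update phase at critical level $\ell$, I will bound each sub-step by $O(f|X'|)$. The level fixing process touches only the $L+1=O(\log n)$ counters $|P_j|,|D_j|$ and its cost is easily dominated by the rest. Discarding the sets $\cF(p)$ for pivots $p$ with $\ell(p)\le \ell$ and assembling $X'$ is done by iterating over such pivots: for each $p$, we unmark in $\cF$ the at most $f$ sets containing $p$ (cost $O(f)$) and walk $\cX(p)$, appending each element to $X'$. Since $\{\cX(p)\}_{p\,\text{discarded}}$ partitions $X'$ and each discarded pivot contributes at least one element to $X'$, the total is $O(f|X'|)$. The movement step inspects each $e\in X'$ and, using the precomputed membership list, finds in $O(f)$ the highest-level set in $\cF$ at level $>\ell$ containing $e$ and updates the corresponding $Extra(p)$ and $\cX(p)$ in $O(1)$, for a total of $O(f|X'|)$. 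The covering step runs Random Cover on the induced instance $(Y',\cS')$ with $|Y'|\le|X'|$, which, by the same analysis used in the decremental case (Section~\ref{sec:running-time}), takes $O(f|Y'|)=O(f|X'|)$; the induced family $\cS'$ is produced on the fly from the membership lists at the same asymptotic cost.

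The main obstacle is the charging argument for the update phase, namely verifying that the work of unmarking sets from $\cF$ and re-routing elements through the $Orig$ and $Extra$ lists stays within the budget $O(f|X'|)$. This reduces to the invariant $\sum_{p\,\text{discarded}}|\cX(p)|=|X'|$, which follows from the fact that the algorithm accounts each covered active element to exactly one pivot via $\cX(\cdot)$, so that every element released into $X'$ is released by a unique discarded pivot; combined with the $O(1)$ per-set and per-element lookups, the stated $O(f|X'|)$ bound follows.
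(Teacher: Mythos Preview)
Your proposal is correct and follows essentially the same approach as the paper's proof in Section~\ref{sec:running-time}: both maintain per-element membership lists of size at most $f$, a Boolean flag plus level for each set in $\cF$, and the lists $\cX(p)=Orig(p)\cup Extra(p)$ per pivot, and both decompose the update phase into (i) discarding pivots and assembling $X'$ via the disjoint $\cX(p)$'s, (ii) the movement step at cost $O(f)$ per element of $X'$, and (iii) the covering step via Random Cover on $Y'\subseteq X'$. The only minor slip is your claim that level fixing ``touches only the $L+1=O(\log n)$ counters''; the paper's implementation actually spends $O(\ell^2)$ time, but as you correctly note (and as the paper argues for Lemma~\ref{lma:update-runtime}) this is dominated by the $\Omega(2^\ell)$ work of the subsequent update, so the conclusion is unaffected.
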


\subsection{Analysis of the Competitive Ratio}

\begin{lemma}
\label{lem:apxFull}
The competitive ratio of the algorithm is at most $f/(1-\eps)$.
\end{lemma}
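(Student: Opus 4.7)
The plan is to mirror the structure of Lemma~\ref{lem:apxDec} by establishing three inequalities that, when combined, give the desired ratio. Concretely, let $P^t,U^t,D^t$ denote the multisets of all pivots, undeleted pivots, and deleted pivots maintained just before the $(t+1)$-st update, and let $\mathsf{OPT}^t$ be an optimal cover of the active set $A$ at that time. The three inequalities are (i)~$|\mathcal{F}^t|\le f\cdot |P^t|$, which is immediate from the specification $\mathcal{F}=\{S:S\ni p\text{ for some }p\in P\}$ and the fact that each pivot lies in at most $f$ sets; (ii)~$|\mathsf{OPT}^t|\ge |U^t|$, which is the key structural claim; and (iii)~$|P^t|\le |U^t|/(1-\eps)$, which holds because an update phase is triggered precisely when $|D|$ exceeds $\eps|P|$, so after the last such phase the ratio $|D^t|/|P^t|$ is at most $\eps$.

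The main work is step (ii). As in the decremental case, it suffices to prove by induction on the number of updates the following invariant: \emph{no two distinct pivots of $P^t$ are contained in a common set $S\in\mathcal{S}$}. Given this invariant, any feasible cover of $A\supseteq U^t$ must spend a distinct set on each undeleted pivot, yielding $|\mathsf{OPT}^t|\ge |U^t|$. The base case $P^0=\emptyset$ is trivial. For the inductive step, I would analyze the three types of updates separately:
\begin{itemize}
\item \textbf{Insertion of $e$.} If $\{S:S\ni e\}\cap \mathcal{F}\neq\emptyset$, $P$ is unchanged and the invariant is preserved. Otherwise $e$ is added as a new level-$0$ pivot. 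Since $\mathcal{F}=\{S:S\ni p\text{ for some }p\in P\}$, the hypothesis that $e$ is uncovered means no set containing $e$ contains a pre-existing pivot, so the new pivot $e$ does not share a set with any other pivot.
\item \textbf{Deletion of $e$ that does not trigger an update.} The set $P$ is unchanged (only the partition into $U,D$ is updated), so the invariant persists.
\item \textbf{Update phase at critical level $\ell$.} After discarding all pivots at levels $\le \ell$, the remaining pivots are a subset of $P^{t-1}$, so the invariant still holds among them. The movement step creates no new pivots (it only enlarges $\mathsf{Extra}(p)$ for pivots at levels $>\ell$). In the covering step, Random Cover is invoked on $(Y',\mathcal{S}')$ where $Y'$ consists of active elements left uncovered in $\mathcal{F}$ after the movement step; thus every $y\in Y'$ lies in no set of $\mathcal{F}$, hence in no set containing a retained pivot. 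Consequently, any new pivot chosen from $Y'$ cannot share a set with any retained pivot. The property among the newly chosen pivots themselves follows exactly as in Lemma~\ref{lem:apxDec}: Random Cover never selects a pivot that is already covered by a set selected for a previously chosen pivot.
\end{itemize}

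The main obstacle I expect is the update-phase case, specifically arguing cleanly that the movement step does not break the invariant and that $Y'$ is genuinely disjoint from every set containing a retained pivot; both follow from the fact that $\mathcal{F}$, after discarding low-level pivots, still contains \emph{every} set that meets a retained pivot, so any element not yet covered cannot lie in such a set. Combining the three inequalities gives $|\mathcal{F}^t|\le f\cdot |P^t|\le \tfrac{f}{1-\eps}|U^t|\le \tfrac{f}{1-\eps}|\mathsf{OPT}^t|$, as claimed.
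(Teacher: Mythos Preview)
Your proposal is correct and follows essentially the same approach as the paper's proof: both establish the three inequalities $|\mathcal{F}^t|\le f|P^t|$, $|P^t|\le |U^t|/(1-\eps)$, and $|\mathsf{OPT}^t|\ge |U^t|$, and both prove the last one via an inductive pivot-disjointness invariant with the same case split over insertions, non-triggering deletions, and update phases (using the movement step to separate retained pivots from new ones). The only cosmetic difference is that you state the invariant for all of $P^t$ rather than just $U^t$, which is harmless since the same argument goes through.
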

\begin{proof}
Consider the data structure right before the $t$-th update. Let $P^t$ be the set of pivots at that time, with $U^t$ being the subset of undeleted pivots. We also let $OPT^t$ and $\cF^t$ be the optimal and approximate solution at that time. 

Observe that $|\cF^t|\leq f\cdot |P^t|$ by construction. We claim that $|OPT^t|\geq |U^t|$. This implies the claim since by construction $|P^t|\leq |U^t|/(1-\eps)$ at any time.

To see that, let us show by a simple induction that, for any two distinct $p,p'\in U^t$, there is no set $S\in \cF$ covering both $p$ and $p'$. Thus $OPT^t$ needs to include a distinct set for each element of $U^t$. Since we start with an empty set cover, the property holds at the outset. Assume the property holds up to step $t\geq 1$, and consider step $t+1$. If an element $e$ is inserted at time $t+1$, then $e$ becomes a pivot if and only if $e$ is not covered by any existing set in $\cF$. Hence, the property holds. If a non-pivot element $e$ is deleted at $t+1$, or $e$ is a pivot but its deletion does not trigger an update phase, then since we do not change the solution $\cF$, the property holds by the inductive hypothesis.

Now, assume $e$ is a pivot and its deletion triggers an update phase at critical level $\ell'$, with the elements covered at levels $i\leq \ell$ being denoted by $X'$. Note that we select a set of new pivots from $Y' \subseteq X'$ and let $\cF'$ denote the new sets that are added after the update phase. We consider three cases. if $p, p'\in \cF$, then they do not belong to the same set by the inductive hypothesis. If $p, p'\in \cF'$, then they do not belong to the same set since the Random Cover subroutine picked both these elements as pivots. Finally, if $p\in \cF$ and $p'\in \cF'$, then the movement step ensures that $p'$ is not covered by $\cF$ whereas all sets containing $p$ are in $\cF$. Therefore, $p$ and $p'$ do not belong to the same set in this case either. Therefore, the property holds after the $(t+1)$-st  update.
\end{proof}

\subsection{Analysis of the Amortized Update Time}

Consider a sequence of $t$ updates, and let $T$ be the time right before the $(t+1)$-st update occurs (or the end of the execution if no such update occurs). Our goal is to bound the 
expected update time till $T$ as $O\left(t\cdot \frac{f^2\log{n}}{\epsilon}\right)$.
We recall some definitions from Section~\ref{sec:decremental} and introduce some new notation for the purpose of the analysis. 

\noindent{\bf Old notation.} Recall the definition of the critical level $\ell$ and Lemma~\ref{lem:matching}. Note that when performing an update at a critical level $\ell$, a level $i \leq l$ is said to be charged if $d_i \geq \frac{\epsilon}{2} p_i$ and uncharged otherwise.



Suppose that level $i$ is updated $q_i$ times in total, and let $P_i^j$ be the total set of pivots in the $j$-th such update. $P(i)$ is the multiset of pivots obtained by taking union over $P_i^j$ and $p(i)=|P(i)|$. The pivots of type $P_i^j$ where level $i$ is charged on the $j$-th update are called charged and denoted $CH(i)$, $ch(i)=|CH(i)|$. The charged deleted pivots at level $i$ are denoted $D(i), d(i)=|D(i)|$.  Let $I$ be the total number of insertions up to time $T$.

Recall that a level $i$ is globally-charged iff $ch(i) \geq \frac{\epsilon}{4} p(i)$, and globally-uncharged otherwise. Let $GC$ denote the random set of globally-charged levels.

\noindent{\bf New notation.} Let us now define a new mapping $R$ that maps an element $e$ (on which an update phase operates) either to a charged deleted pivot or to an insertion. To construct this mapping, if an element $e$ takes part in $q_e$ update phases, including both the movement and the covering steps, then each of these occurrences is treated separately. 

An element $e \in Orig(p)$ is mapped to a charged deleted pivot $d$, i.e., $R(e)=d$, if $M(p)=d$. Now consider an element $e \in Extra(p)$. If $e$ has never been an original element, then it must have been inserted as an extra element and has taken part only in movement steps since then. This is because whenever a covering step processes an element, it becomes an original element. In this case, we map $e$ to its insertion, denoted $e_I$ and set $R(e)=e_I$. Otherwise, consider the last update phase, when $e$ was an original element just before the update and became an extra element immediately after it. If $e \in Orig(p')$ and $M(p')= d'$ during that phase, then set $R(e)=d'$.

Note that all $q_e$ occurrences of $e$ are mapped by $R$ to either to the insertion $e_I$ or to charged deleted pivots. If $R(e)=e_I$, we say insertion $e_I$ is responsible for $e$ and if $R(e)=d$, we say the charged deleted pivot $d$ is responsible for $e$. Note that a charged deleted pivot $d$ is responsible for an element $e$ if an only if $e \in Orig(p)$ for some pivot $p$ and $M(p)=d$.

Let us use $L=\lfloor \log_2 n\rfloor$ to denote the largest level and $\mathcal{X}$ to denote the multiset of elements obtained by taking the union of $Orig(p)$ and $Extra(p)$ over all $p \in P$.

\begin{lemma}
\label{lem:responsible-1}
    Each charged deleted pivot $d\in D(i)$ is responsible for at most  $(2/\epsilon)\cdot f\cdot 2^{i+1}\cdot (L+1)$ elements of $\mathcal{X}$.
\end{lemma}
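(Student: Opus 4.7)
The plan is to fix a specific occurrence of $d\in D(i)$ in some update phase $\phi$ (i.e., $d$ is a deleted pivot at level $i$ and level $i$ is charged in $\phi$) and count the element-occurrences in $\mathcal{X}$ that the responsibility map $R$ directs to this occurrence of $d$.

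The first step would apply Lemma~\ref{lem:matching} inside phase $\phi$: the $b$-matching $M$ with $b=2/\eps$ matches $d$ to at most $2/\eps$ pivots $p$, each of level $\ell(p)\le \ell(d)=i$. For each such $p$, I would use the structure of the Random Cover subroutine that created $p$: the chosen set $Z$ had the maximum intersection with the uncovered elements $Y$ at that moment, so every set containing $p$ contributes at most $|S(p)|<2^{i+1}$ elements to $Orig(p)$; since $p$ lies in at most $f$ sets, $|Orig(p)|\le f\cdot 2^{i+1}$. Summing over the $\le 2/\eps$ matched pivots shows that the distinct elements belonging to some $Orig(p)$ with $M(p)=d$ in $\phi$ number at most $(2/\eps)\cdot f\cdot 2^{i+1}$.

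Next, using the definition of $R$, every occurrence of an element $e\in\mathcal{X}$ mapped to this occurrence of $d$ must arise from either (a) an occurrence in $\phi$ with $e\in Orig(p)$ and $M(p)=d$, or (b) an occurrence in a later phase where $e$ appears as extra and its last original-to-extra transition took place in $\phi$ via some pivot $p$ with $M(p)=d$. In either case, $e$ belongs to some $Orig(p)$ counted above, so the remaining task is to bound the number of occurrences contributed by each such element. The key geometric observation is that whenever $e$'s current extra pivot is discarded in a later update phase $\phi'$ with critical level $\ell'$, and $e$ is reassigned through the movement step, the movement rule forces the new pivot of $e$ to lie at level strictly greater than $\ell'$, which is at least the level of $e$'s old pivot. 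Hence the levels traversed by $e$ while remaining in the extra-chain initiated at $\phi$ form a strictly increasing sequence bounded above by $L$, giving at most $L$ extra-type occurrences. Adding the single original-type occurrence in $\phi$ itself yields at most $L+1$ occurrences per element, and multiplying gives the bound $(2/\eps)\cdot f\cdot 2^{i+1}\cdot(L+1)$.

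The main subtlety will be the chronological bookkeeping needed to justify that the extra-chain really cannot restart: once $e$ is processed by a covering step in some later phase (becoming original again), any subsequent occurrences are no longer mapped to this occurrence of $d$, because the ``last original-to-extra phase'' of $e$ has been overwritten. Making this precise is what guarantees that the strictly increasing level argument applies to a single contiguous segment of $e$'s history and therefore bounds the chain length uniformly by $L+1$.
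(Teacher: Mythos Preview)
Your proposal is correct and follows essentially the same approach as the paper's proof. Both arguments proceed by (i) using Lemma~\ref{lem:matching} to bound the number of pivots $p$ with $M(p)=d$ by $2/\eps$, (ii) bounding $|Orig(p)|\le f\cdot 2^{i+1}$ via the maximality of the set $Z$ chosen in the Random Cover subroutine together with $\ell(p)\le i$, and (iii) bounding by $L+1$ the number of occurrences of any single element $e\in Orig(p)$ that get mapped to $d$, using that each movement step strictly increases $e$'s level; the paper phrases step (iii) as ``$r$ movement steps imply $r\le L$, hence at most $r+1\le L+1$ responsibilities,'' which is exactly your strictly-increasing-level chain argument.
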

\begin{proof}
Consider a deleted pivot $d \in D(i)$ and let $P_d=\{p \mid M(p)=d\}$ be the set of pivots mapped to $d$. Consider any pivot $p \in P_d$. Note that, $d$ is only responsible for the elements in $Orig(p)$ and that $\ell(p)\leq \ell(d)=i$. By definition of $\ell(p)$, each set containing $p$ covers less than $2^{\ell(p)+1}$ new elements at the time $p$ was selected to be a pivot. Hence $|Orig(p)|< f\cdot 2^{\ell(p)+1}\leq f\cdot 2^{i+1}$. Now let us count the number of times $d$ was held responsible for $e \in Orig(p)$. The element $e$ was an original element just before the update phase that operates on $d$. If $e$ gets processed during that phase by the covering step, then $d$ was responsible for $e$ only once. Otherwise, $e$ gets processed by the movement step during that phase and becomes an extra element. If $e$ is processed $r$ times by the movement step before becoming an original element again, then $d$ is responsible $r+1$ times for $e$. However, the level of $e$ strictly increases after each movement step. Therefore, $r \leq L$.
Thus, $d$ can be responsible for $e$ at most $L+1$ times. This holds for all $p \in P_d$. Now, the claim follows noting $|P_d|\leq 2/\eps$ by Lemma~\ref{lem:matching}.   
\end{proof}

\begin{lemma}
\label{lem:responsible-2}
    Each insertion $e_I \in I$ is responsible for at most $L+1$ elements of $\mathcal{X}$.
\end{lemma}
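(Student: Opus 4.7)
The plan is to exploit the fact that whenever $R(e) = e_I$, the element $e$ has never been an original element since its insertion, so $e$ always sits inside some $Extra(p)$, and the level of the hosting pivot strictly increases each time $e$ is touched.

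First I would unpack the definition of $R$: for an occurrence to be mapped to $e_I$, by construction $e$ is in $Extra(p)$ for some pivot $p$ at that moment, and $e$ must never have been an original element since the insertion $e_I$---otherwise the case ``$e \in Extra(p)$ but previously in some $Orig(\cdot)$'' of the mapping rule would send $e$ to a charged deleted pivot instead. In particular, each update phase in which $e$ participates must process $e$ through the movement step, because the covering step would immediately make $e$ an original element of a newly chosen pivot.

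Next I would argue that each such movement strictly increases the level of the pivot hosting $e$. If $e \in Extra(p)$ with $\ell(p) = \ell_0$, then for $e$ to belong to $X'$ during an update phase with critical level $\ell$, we must have $\ell_0 \leq \ell$ (otherwise $p$ would not be discarded and $e$ would remain covered). The movement step then selects a set covering $e$ at level strictly larger than $\ell$, placing $e$ into $Extra(p')$ with $\ell(p') > \ell \geq \ell_0$.

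Finally, I would put these pieces together: the sequence of distinct pivots $p_0, p_1, \ldots, p_k$ in whose $Extra$ sets $e$ has sat, in order (starting from $p_0$ hosting $e$ right after insertion), has strictly increasing levels in $\{0,1,\ldots,L\}$, so $k+1 \leq L+1$. Each such membership contributes exactly one element to the multiset $\mathcal{X}$ associated with $e$, yielding the claimed bound. The only mild obstacle will be parsing the case analysis of the $R$ mapping carefully enough to confirm that $R(e) = e_I$ forces the ``never was original'' branch; the rest is a clean monotonicity argument on levels.
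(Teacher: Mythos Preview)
Your proof is correct and follows essentially the same approach as the paper's own argument: both exploit that occurrences mapped to $e_I$ force $e$ to have lived only in $Extra(\cdot)$ sets, and that each movement step strictly raises the level of the hosting pivot, bounding the number of such hosts by $L+1$. One minor imprecision: your claim that ``each update phase in which $e$ participates must process $e$ through the movement step'' is not quite right for the \emph{last} such occurrence, which may well be a covering step (the condition $R(e)=e_I$ only requires that $e$ was never original \emph{before} the phase, not that it remains extra afterward); however, your count via the chain $p_0,\ldots,p_k$ of $Extra$-hosts absorbs this occurrence anyway, so the bound stands.
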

\begin{proof}
An insertion $e_I$ is only responsible for the element $e$. If $e$ takes part in $r$ movement steps before becoming an original element for the first time, then $e_I$ is responsible $r+1$ times for $e$.  Since the level of $e$ strictly increases after each movement step, we have $r \leq L$. Thus, the claim follows.
\end{proof}

We now have an analog of Lemma~\ref{lem:timeGC}. 
\begin{lemma}\label{lem:timeGC-dyn}
The expected running time of the algorithm after $t$ updates is
$$
O\left(f\cdot |I|\cdot (L+1)+\sum_{i=0}^{L} \pr{i\in GC}\cdot \ex{\frac{f^2}{\eps}\cdot 2^i\cdot  d(i)\cdot (L+1) \;\middle\vert\; i\in GC}\right).
$$
\end{lemma}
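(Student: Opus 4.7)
The plan is to decompose the total running time into (i) the cost of insertions and non-triggering deletions, and (ii) the cost of the update phases, and then to bound (ii) via the responsibility mapping $R$ combined with a token redistribution argument modeled on the one used in Lemma \ref{lem:timeGC}. For (i), Lemma \ref{lma:update-dynamic} gives an $O(f)$ bound per non-triggering update, and since every deletion removes a previously inserted element (with reinsertions counted as fresh insertions), the number of non-triggering updates is at most $O(|I|)$. This contributes $O(f\cdot |I|)$, which is absorbed into the $O(f\cdot |I|\cdot (L+1))$ term of the statement.

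For (ii), Lemma \ref{lma:update-dynamic} gives a per-phase cost of $O(f|X'|)$, so the total update-phase cost is $O(f|\mathcal{X}|)$. I then charge each element of $\mathcal{X}$ to its $R$-image: by Lemma \ref{lem:responsible-2} at most $(L+1)\cdot |I|$ elements are charged to insertions, and by Lemma \ref{lem:responsible-1} each charged deleted pivot $d\in D(i)$ accounts for at most $(2/\eps)\cdot f\cdot 2^{i+1}\cdot (L+1)$ elements. Multiplying by the per-element cost $f$, the total update-phase cost is at most
$$O\!\left(f\cdot |I|\cdot (L+1)\right)+O\!\left(\sum_{i=0}^{L}\tfrac{f^2}{\eps}\cdot 2^i\cdot d(i)\cdot (L+1)\right),$$
but the second sum here runs over all levels rather than only the globally-charged ones.

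To restrict the sum to $i\in GC$ I replay the token-redistribution of Lemma \ref{lem:timeGC}. Interpret the $O(\tfrac{f^2}{\eps}\cdot 2^i\cdot (L+1))$ summand per $d\in D(i)$ as an initial supply of tokens on $d$, and process levels in increasing order: whenever the current level $i$ is globally-uncharged, match each $d\in D(i)$ to $2/\eps$ distinct uncharged pivots at level $i$ (possible because $p(i)-ch(i)\geq(1-\tfrac{\eps}{4})p(i)\geq (2/\eps)\,ch(i)\geq (2/\eps)\,d(i)$), send an $\eps/2$-fraction of $d$'s tokens to each such pivot, and forward those tokens along the global matching $M$ of Lemma \ref{lem:matching} to $M(p)$, which lies at a strictly higher level. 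The very same induction as in Lemma \ref{lem:timeGC}, with the extra $(L+1)$ factor carried through unchanged, shows that each $d\in D(i)$ ends with at most $O(\tfrac{f^2}{\eps}\cdot 2^i\cdot (L+1))$ tokens, so the entire redistributed mass lies in globally-charged levels. Taking expectation over the algorithm's randomness and using $\ex{X\cdot\mathbf{1}[i\in GC]}=\pr{i\in GC}\cdot \ex{X\mid i\in GC}$ then yields the bound in the statement.

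The main obstacle is the inductive step of the token transfer: one has to check that the recurrence $A_i\leq O(\tfrac{f^2}{\eps}\cdot 2^i\cdot (L+1))+\max_{\ell<i}A_\ell$, where $A_\ell$ denotes the token count on a charged deleted pivot at level $\ell$, does not blow up across the $L+1$ levels. It closes because the geometric factor $2^i$ dominates, which is precisely the reason the proof of Lemma \ref{lem:timeGC} works; the only subtlety for the fully dynamic setting is that the initial tokens now carry an extra $(L+1)$ factor coming from the potentially repeated movement of extra elements, but this factor is a constant per pivot and so propagates through the recurrence unchanged.
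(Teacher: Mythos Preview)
Your proposal is correct and follows essentially the same route as the paper: bound the total work by $O(f(|I|+|\mathcal{X}|))$ via Lemma~\ref{lma:update-dynamic}, use the responsibility map $R$ together with Lemmas~\ref{lem:responsible-1} and~\ref{lem:responsible-2} to convert $|\mathcal{X}|$ into $(L+1)|I|+\tfrac{2}{\eps}f(L+1)\sum_i d(i)2^{i+1}$, and then replay the token-transfer argument of Lemma~\ref{lem:timeGC} with the extra $(L+1)$ factor carried along. The paper's own proof does exactly this, only more tersely (it simply cites Lemma~\ref{lem:timeGC} rather than re-deriving the redistribution inductively); your explicit justification that the number of non-triggering updates is $O(|I|)$ because deletions are bounded by insertions is a small but correct elaboration that the paper omits.
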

\begin{proof}
From Lemma~\ref{lma:update-dynamic}, the total update time up to time $T$ is $O(f\cdot (|I|+|\mathcal{X}|))$. Now from Lemma~\ref{lem:responsible-1} and Lemma~\ref{lem:responsible-2}, 
\begin{equation*}
    f\cdot (|I|+|\mathcal{X}|)
    = f\cdot \left((L+1)\cdot |I| + (2/\epsilon) \cdot f\cdot  (L+1)\cdot \sum_{i}d(i)\cdot 2^{i+1}\right).
\end{equation*}
We can give each charged deleted pivot $\frac{4}{\epsilon}\cdot f\cdot (L+1)\cdot 2^{i}$ tokens and then follow the token transfer process of Lemma~\ref{lem:timeGC} so that all these tokens are transferred to charged deleted pivots in the globally charged levels. Moreover, each charged deleted pivot in a globally charged level contains at most $\frac{8}{\epsilon}\cdot f\cdot (L+1)\cdot 2^{i}$ tokens. The lemma now follows.
\end{proof}

Let $t(i)$ denote the (random) number of deletions at level $i$ up to time $T$, and let $t'=\sum_{i}t(i)$. Then $t' \leq t$. 
Exactly as in the decremental case, we can now use Lemmas~\ref{lem:domination},~\ref{lem:boundDeletions}, and~\ref{lem:decrementalTime} to bound
\begin{equation}
\label{eq:bound-dynamic}
\sum_{i=0}^{L} \pr{i\in GC}\cdot \ex{\frac{f^2}{\eps}2^i d(i)(L+1) \;\middle\vert\; i\in GC}
\leq \frac{1024}{\epsilon^5} f^2 t (L+1).
\end{equation}

%
\begin{lemma}\label{lem:fullydynamicTime}
The expected running time of the algorithm in the fully dynamic case is $O\left(\frac{f^2 \log{n}}{\eps^5}\cdot t\right)$, where $t$ is the number of updates.
\end{lemma}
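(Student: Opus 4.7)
The plan is to simply combine the pieces already assembled just before the lemma. From Lemma~\ref{lem:timeGC-dyn}, the expected total running time up to time $T$ is bounded by
$$
O\!\left(f\cdot |I|\cdot (L+1)\right)\;+\;O\!\left(\sum_{i=0}^{L} \pr{i\in GC}\cdot \ex{\tfrac{f^2}{\eps}\cdot 2^i\cdot d(i)\cdot (L+1) \;\middle\vert\; i\in GC}\right),
$$
so there are only two terms to control.

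For the second (dominant) term, I would invoke equation~\eqref{eq:bound-dynamic}, which (by the same chain of arguments as in the decremental analysis, i.e., the domination Lemma~\ref{lem:domination} and the adversarial lower bound Lemma~\ref{lem:boundDeletions}, applied to the quantities $t(i)$ and the conditional expectation $\ex{d(i)\mid i\in GC}$) yields an upper bound of $\frac{1024}{\eps^5}\,f^2\,t\,(L+1)$. Nothing new is needed here; the lemmas cited were stated in enough generality that the same inequalities carry over verbatim to the fully dynamic setting, using $t'=\sum_i t(i)\le t$.

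For the first term, the key observation is that every insertion is itself one of the $t$ updates, so $|I|\le t$. Hence $f\cdot |I|\cdot (L+1)=O(f\,t\,\log n)$, which is dominated by the second term for any $f\ge 1$ and $\eps\le 1$.

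Finally, since $L=\lfloor\log_2 n\rfloor$ gives $L+1=O(\log n)$, summing the two bounds gives
$$
O\!\left(f\,t\,\log n\right)\;+\;O\!\left(\tfrac{f^2\,t\,\log n}{\eps^5}\right)\;=\;O\!\left(\tfrac{f^2\log n}{\eps^5}\cdot t\right),
$$
which is exactly the claimed bound. There is no real obstacle: the only thing worth double-checking is that the token-transfer argument inside Lemma~\ref{lem:timeGC-dyn} and the coupling/adversary arguments of Lemmas~\ref{lem:domination}--\ref{lem:boundDeletions} are insensitive to the presence of insertions, which they are because insertions only create new level-$0$ pivots or attach extra elements to existing pivots, and these effects have already been paid for by the extra $(L+1)$ factor coming from Lemmas~\ref{lem:responsible-1} and~\ref{lem:responsible-2}.
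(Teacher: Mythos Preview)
Your proposal is correct and follows essentially the same approach as the paper: the paper's own proof is a one-line combination of Lemma~\ref{lem:timeGC-dyn}, Eq.~\eqref{eq:bound-dynamic}, and the observations $t'\le t$, $|I|\le t$, $L=O(\log n)$, which is exactly what you do (with a bit more commentary). No gaps.
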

\begin{proof}
This follows from Lemmas~\ref{lem:decrementalTime} and ~\ref{lem:timeGC-dyn}, and Eq.~\eqref{eq:bound-dynamic}, 
noting that $t' \leq t$, $|I| \leq t$, and $L =\O(\log{n})$. 
\end{proof}

We summarize the results in the following theorem.
\begin{theorem}
\label{theorem:main-dynamic}
Given an $\eps >0$, let $\Delta=\frac{f^2\log{n}}{\eps^5}$. There exists a fully-dynamic algorithm for set cover that achieves an $f(1+\epsilon)$ approximation and takes $O(\Delta \cdot t)$ time in expectation over $t$ updates.
\end{theorem}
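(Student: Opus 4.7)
The plan is to combine the two guarantees that have already been established separately in this section, namely the approximation bound from Lemma~\ref{lem:apxFull} and the expected update time bound from Lemma~\ref{lem:fullydynamicTime}, so the proof is essentially a bookkeeping step rather than a new argument.

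For the approximation factor, I would invoke Lemma~\ref{lem:apxFull}, which shows that at every time $t$ we have $|\cF^t| \le f\cdot |P^t| \le \frac{f}{1-\eps}\cdot |U^t| \le \frac{f}{1-\eps}\cdot |OPT^t|$. Since $\frac{1}{1-\eps} \le 1+2\eps$ for $\eps \le 1/2$, applying the algorithm with parameter $\eps' = \eps/2$ immediately yields an $f(1+\eps)$ competitive ratio while affecting the update time only by a constant factor that is absorbed in the $O(\cdot)$; the case $\eps \ge 1/2$ is trivial because any feasible solution is already $O(f)$-approximate. Note also that the approximation guarantee is deterministic, so no probabilistic statement is needed here.

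For the running time, I would simply quote Lemma~\ref{lem:fullydynamicTime}, which states that after $t$ updates the expected total running time is $O\!\left(\frac{f^2 \log n}{\eps^5}\cdot t\right) = O(\Delta\cdot t)$ with $\Delta=\frac{f^2\log n}{\eps^5}$. Dividing by $t$ gives the claimed amortized expected update time of $O(\Delta)$ per operation. Putting the two pieces together yields the theorem.

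The only mild subtlety, and the part one might mistakenly overlook, is the rescaling of $\eps$ to turn the $\frac{f}{1-\eps}$ bound into an $f(1+\eps)$ bound; all the dependence on $\eps$ in both the approximation and the update time is polynomial in $1/\eps$, so any constant rescaling is harmless and is folded into the $O(\cdot)$. There is no further obstacle, since the heavy lifting (the competitive ratio analysis and the token-transfer argument bounding the amortized cost) has already been completed in Lemmas~\ref{lem:apxFull}, \ref{lem:timeGC-dyn}, and~\ref{lem:fullydynamicTime}.
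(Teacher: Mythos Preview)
Your proposal is correct and matches the paper's approach: the theorem is stated as a summary of the preceding lemmas, and the argument is precisely to combine Lemma~\ref{lem:apxFull} for the approximation guarantee with Lemma~\ref{lem:fullydynamicTime} for the expected running time. Your explicit remark about rescaling $\eps$ to pass from $f/(1-\eps)$ to $f(1+\eps)$ is a detail the paper leaves implicit, but it is the right observation and costs only a constant factor in $\Delta$.
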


\section{Implementation Details and Running Time}
\label{sec:running-time}
In this section, we give implementation details of the algorithms, leading to the proofs of Lemma~\ref{lma:update-runtime} (decremental) and Lemma~\ref{lma:update-dynamic} (fully dynamic). 

\subsection{Decremental Algorithm}

We assume that any given set cover instance $(X',\cS')$, with maximum frequency $f$ is represented as follows. Elements (resp., sets) are labelled $1$ to $n'=|X'|$ (resp., $m'=|\cS'|$). W.l.o.g. we can assume that each set covers at least one element of $X'$, so that $m'\leq f\cdot n'$. We have a vector $SET$ indexed by elements, where $SET[e]$ is the list of sets $\cS'(e)$ containing $e$. Observe that $SET[e]$ contains at most $f$ entries. We assume that sets are described by a vector $ELEM$ indexed by sets, where $ELEM[S]$ is a list of elements contained in set $S$. We keep a pointer from each $e\in ELEM[S]$ to the corresponding entry $S$ in $SET[e]$ and vice-versa. 

In order to implement deletions, we proceed as follows. We maintain a Boolean vector $DEL$ indexed by $e\in X'$, which is initialized to false. When element $e$ is deleted, we set $DEL[e]=true$. Furthermore, we scan $SET[e]$, and for each $S\in SET[e]$ we remove $e$ from $ELEM[S]$. Note that this can be done in $O(1)$ time for each set $S$ using the pointers mentioned above, i.e., in time $O(f)$ per element $e$. This also implies that deleting all the elements one by one takes $O(f|X'|)$ time in total.

\noindent
{\bf The Random Cover Subroutine.} 
The Random Cover procedure computes a sequence of pivots $P'$. Furthermore, for each pivot $p\in P'$, it computes a collection $\cF(p)$ of sets that are added to the solution because of $p$, and the corresponding set $\cX(p)$ of newly covered elements. This takes $O(f+|\cX(p)|)$ time for a given pivot $p$.


It remains to specify how we efficiently extract a set $S$ of maximum cardinality at each step to select a pivot. We maintain a list $SORT$ whose entries are pairs $(i,L_i)$, where $i$ is the cardinality of a set and $L_i$ is the list of sets of cardinality $i$. We store all such entries with $L_i$ not empty, in decreasing order of $i$. This list can be initialized in linear time $O(f|X'|)$ (say, using radix sort). We also maintain pointers from each set $S$ to the corresponding entry in the list $L_{|S|}$.

The first element of the first list $L_i$ is the selected set $S$ at each step. Then we update $SORT$ as follows. Each time we remove an element $e$ from some set $S'$ of cardinality $i$, we remove $S'$ from $L_{i}$ and add it to $L_{i-1}$. Note that this might involve creating a new entry $(i-1,L_{i-1})$ in $SORT$ (if $S$ becomes the only set of cardinality $i-1$), or deleting the entry $(i,L_i)$ from $SORT$ (if $S$ was the only set of cardinality $i$). In any case, these operations can be performed in $O(1)$ time. It follows that the entire procedure can be implemented in time $O(f\cdot |X'|)$ time.

\noindent
{\bf The Set Cover Solution.} We store and maintain the approximate solution as follows. We maintain the set cover instance under deletions as described before. Furthermore, we maintain vectors $\cF$ and $\cX$. For a pivot $p$, $\cF(p)$ is the corresponding list of selected sets because of $p$, and $\cX(p)$ is the associated list of newly covered elements due to these selected sets. These two lists are empty if $p$ is not a pivot.  

\noindent{\bf Level selection.} We maintain counters $D$ and $P$ labelled by levels $i=0,\ldots,\lceil \log_2 n\rceil$, where $D[i]$ (resp., $P[i]$) is the number of deleted pivots (resp., all pivots) at level $i$. When we delete a pivot at level $i$, we increment $D[i]$. When we update at critical level $\ell$, we set $D[i]=P[i]=0$ for all $i\leq \ell$. Furthermore we increment $P[i]$ for each newly computed pivot of level $i$. Clearly these operations have amortized cost $O(1)$ per update. We similarly maintain the total number $\tilde{D}$ and $\tilde{P}$ of deleted pivots and all pivots, respectively. By comparing $\tilde{D}$ and $\tilde{P}$ at each deletion, we can check whether the condition for the update of a suffix is satisfied. In that case, using $D$ and $P$, it is easy to compute in $O(\ell^2)$ time the lowest critical level $\ell$.

\noindent
{\bf Update Phase.} We next describe how, given a critical level $\ell$, we update the approximate solution. We keep a list $GREEDY$ whose entries are pairs $(j,P_j)$. Here $j$ is a level and $P_j$ is the list of pivots of that level. We keep such entries only for non-empty $P_j$, in increasing order of $j$.

Given a critical level $\ell$, we scan the list $GREEDY$ and compute $P':=\cup_{j\leq \ell}P_j$ together with $X':=\cup_{p\in P'}\cX(p)$ (represented as lists). We remove all the corresponding entries from $GREEDY$, and reset the corresponding values of $\ell(p)$, $S(p)$, $\cF(p)$, and $\cX(p)$.

Let us show how to build the data structures for the subinstance $(X',\cS')$, with $\cS'=\{S\cap X': S\in \cS\text{ and } S\cap X'\neq \emptyset\}$, in time $O(f|X'|)$. Let $n'=|X'|$ and $m'=|\cF'|\leq f n'$. By scanning the entries of $SET$ corresponding to $X'$, we build the list of indexes $\cS'$. We now map $X'$ into a set of new indexes in $[1,O(n')]$ by means of a perfect hash function, and similarly map $\cF'$ into a set of new indexes in $[1,O(m')]$. These perfect hash functions can be built in expected linear time using well-known constructions (say, $2$-level hashing \cite{FKS84}). Observe that some indexes might not be used: we interpret those indexes as dummy elements and sets.
Given this, we can easily build in linear time the data structures $SET'$ and $ELEM'$ for the new instance.  

The use of random hash functions can be avoided by assuming that we are given access to two arrays $\elemm$ and $\setm$ respectively of size $n$ and $m \leq nf$ that are initialized to all zeros. We now map $X'$ to $[1,n']$ and $\cS'$ to $[1,m']$ as follows. We create vectors $\elemm^{-1}$ and $\setm^{-1}$ of size $n'$ and $fn'$, resp., which are initialized to zero. We iterate over $X'$: when considering the $j$-th element of $X$ of global id $k$ (that is, it is the $k$-th element in $[1,n]$), then we set $\elemm[k]=j$ and $\elemm^{-1}[j]=k$. Similarly, we iterate over $\cS'$ and update $\setm$ and $\setm^{-1}$ analogously. In order to handle possible duplicates in $\cS'$ (and possibly in $X'$), we simply do not perform the update when we find an entry of $\setm$ (or $\elemm$) that is already non-zero.
Now, $X'$ has been mapped to $[1,n']$ and $\cS'$ has been mapped to $[1,m']$. This allows us to build vectors $ELEM'$ and $SET'$ in the same way as above.
Once the update phase has ended, we reset $\elemm$ and $\setm$ to $0$ by iterating over $\elemm^{-1}$ and $\setm^{-1}$. The overall process takes $O(fn')$ time.

We feed the vectors $SET'$ and $ELEM'$ to the Random Cover subroutine that outputs a list $P'$ of pivots, plus the associated values $\ell(p)$, $S(p)$, $\cF(p)$, and $\cX(p)$ for each $p\in P'$. Using $\elemm^{-1}$ and $\setm^{-1}$ we can map back the indexes of the corresponding elements and sets into the original indexes.

We remark that by construction, we will have $\ell(p)\leq \ell$. Now, we build a list $GREEDY'$ of the same type as $GREEDY$, however restricted to pivots in $P'$ and to the respective levels. Finally, we concatenate $GREEDY'$ to the beginning of the list $GREEDY$.

We are now ready to prove Lemma \ref{lma:update-runtime}.
We first observe that, up to constant factors, the time taken in one level fixing process at critical level $\ell$ is at most that in the subsequent update phase. Indeed, recall that level fixing at level $\ell$ takes time $O(\ell^2)$. Since $\ell$ is the lowest critical level,
$\ell-1$ is not critical, therefore there is at least one deleted pivot in 
level $\ell$. This implies that this update involves at least $2^\ell$ elements, thus having cost $\Omega(2^{\ell})$. The claim follows. 

We can therefore focus on the cost of the update phase. Each pivot $p$ at level $i$ that participates in the update corresponds to at most $f$ sets of size at most $2^{i+1}$ each. Hence the set $X'$ of elements that participate in the update has size at most $\sum_{i\leq \ell}fp_i 2^{i+1}$. As argued above, we can build the corresponding set cover instance $(X',\cS')$ and run the Random Cover subroutine on it in $O(f|X'|)$ time. This completes the proof of Lemma~\ref{lma:update-runtime}.

\subsection{Fully Dynamic Algorithm}
We now briefly describe the changes in the fully-dynamic algorithm that leads to Lemma~\ref{lma:update-dynamic}.
Along with each $\mathcal{X}(p)$, we also maintain two disjoint subsets of $\mathcal{X}(p)$: $Orig(p)$ and $Extra(p)$. When a set $S$ is included in the current solution with pivot $p$ and newly covers elements $\mathcal{X}(p)$, we follow the implementation details of the decremental case. In addition, we set $Orig(p)=\mathcal{X}(p)$ and $Extra(p)=\emptyset$. This does not change the asymptotic run time.

When an element $e$ is inserted, we iterate over $SET(e)$ and check if there exists any set $S \in SET(e)$ present in the current solution. For each $S \in SET(e)$, the check can be implemented in $O(1)$ time by maintaining a Boolean variable for every $S \in \mathcal{S}$ and setting it to $1$ whenever it is included in the solution. If $S \in \mathcal{S}$, we also maintain a pointer to its pivot $p$, and the value $\ell(p)$. If none of the sets in $SET(e)$ is present, then we include all sets in $SET(e)$ in the solution, mark $e$ as the pivot for these sets and set $\ell(e)=0$. We set $\mathcal{X}(e)=\{e\}$, $Orig(e)=\{e\}$ and $Extra(e)=\emptyset$. This entire process can be implemented in $O(f)$ time as $|SET(e)| \leq f$. Thus, insertion takes $O(f)$ time. 

Deletion without an update phase is implemented in the same way as in the decremental case and, therefore, takes $O(f)$ time as well.
If there is an update phase at a critical level $\ell$, given the instance $(X',\mathcal{S}')$, we first execute the movement step as follows. We scan every element $e \in X'$ and iterate over $SET(e)$ to check if there exists a set $S \in SET(e)$ that is in the current solution at a level strictly higher than $\ell$. If so, we include $e$ as an extra element in the highest such set and discard $e$ from $X'$. This entire operation takes $O(f|X'|)$ time. 

Let $Y'$ be the uncovered elements at the end of the movement step. We run the covering step as in the decremental case over $Y'$. This takes $O(f|Y'|)$ time which is bounded by $O(f|X'|)$. Whenever a new pivot $p$ is selected, the level of $p$ and the corresponding sets are determined in $O(|\mathcal{S}(p)|)$ time. Accordingly, the lists $L_i$s are updated. These updates take $O(1)$ time per set. 

Thus, the update phase in the fully dynamic algorithm can be implemented in $O(f|X'|)$ time, thereby proving Lemma~\ref{lma:update-dynamic}.
 

\section{Conditional Lower Bounds for Dynamic Set Cover}
\label{sec:lb}
A fast algorithm for a dynamic problem usually gives a fast algorithm for its static version. If we can solve dynamic Set Cover with preprocessing time $P(n,f)$ and update time $T(n,f)$, then we can solve the static Set Cover problem in $P(n,f) + n \cdot T(n,f)$ time: $n$ updates are sufficient in order to create an offline instance. This simple connection immediately leads to some lower bounds for dynamic Set Cover. In particular, we get that it is NP-Hard to get an $o(\log{n})$ approximation with polynomial preprocessing and update times. However, this connection does not give any lower bound in the polynomial time solvable regime of Set Cover where a $\min \{f, \log n \}$ approximation is possible in linear time. 
Could there be a dynamic algorithm with such approximation factors that has $o(f)$ or even $O(1)$ update time? This would not imply a new static algorithm for Set Cover.

Of course, such an algorithm is impossible if to insert an element we must explicitly specify the $O(f)$ sets it appears in. But in the model we consider, an update can be specified with much fewer bits. 
We assume that all elements $X$ and sets $\mathcal{S}$ are given in advance, as well as all the membership information. Then, an update can add or remove an elements (in the Element-Update case) or sets (in the Set-Updates case). Only elements from $X$ (or sets from $\mathcal{S}$) can be added or removed, and when an element is removed then a set-cover does not need to cover it. Notice that $O(\log{|X|})$ bits are needed to specify an element insertion, and its membership in all the sets is known from the initial input.
In this model, it is conceivable that an algorithm can spend $o(f)$ time per update and maintain some non-trivial approximation. 
The results in this section show that this is unlikely.

Under SETH, we show that no algorithm can preprocess an instance with $m$ sets and $n$ elements in $poly(n,m)$ time, and subsequently maintain element (or set) updates in $O(m^{1-\eps})$ time, for any $\eps>0$, unless the approximation factor is essentially $m^{\delta}$, for some $\delta>0$. Note that a factor $m$ approximation can be maintained trivially in constant time (pick either zero or all sets). and we show that essentially any $m^{o(1)}$ approximation algorithm requires $\Omega(m^{0.99})$ time.

\begin{theorem}[Main Lower Bound]
\label{thm:lb}
Let $n^{\omega(1)} < m < 2^{o(n)} $ and $t \geq 2$, such that $t = (n/ \log{m})^{o(1)}$.
Assuming SETH, for all $\eps>0$, no dynamic algorithm can preprocess a collection of $m$ sets over a universe $[n]$ in $poly(n,m)$ time, and then support element (or set) updates in $O(m^{1-\eps})$ amortized time, and answer Set Cover queries in $O(m^{1-\eps})$ amortized time with an approximation factor of $t$.
\end{theorem}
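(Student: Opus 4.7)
The plan is to reduce a SETH-hard \emph{batch gap Set Cover} problem to dynamic Set Cover by simulating many static queries via element (or set) updates, following the standard paradigm for fine-grained lower bounds on dynamic problems. The novel ingredient is the construction of the batch gap instance via the distributed PCP framework of Abboud, Rubinstein, and Williams~\cite{ARW17}.

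\textbf{Step 1: A SETH-hard batch gap instance.} Starting from the $N^{2-o(1)}$ SETH lower bound for Orthogonal Vectors on $N$ vectors in dimension $d = \Theta(\log N)$, I would apply distributed PCP to produce an instance of the following batch problem: a fixed family $\mathcal{S}$ of $m$ sets over universe $[n]$, together with $N$ ``queries'' $X_1,\ldots,X_N \subseteq [n]$. The reduction guarantees that in the YES case of OV some $X_i$ is covered by a single set, while in the NO case every $X_i$ requires strictly more than $t$ sets from $\mathcal{S}$. The PCP parameters can be chosen so that $n^{\omega(1)} < m < 2^{o(n)}$ and $t \leq (n/\log m)^{o(1)}$, as in the theorem statement, and the SETH bound for OV translates into an $\Omega(N \cdot m^{1-o(1)})$ lower bound for solving the batch problem.

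\textbf{Step 2: Simulation by dynamic updates.} Suppose for contradiction that a dynamic $t$-approximate Set Cover algorithm $\mathcal{A}$ has $\mathrm{poly}(n,m)$ preprocessing and $O(m^{1-\eps})$ amortized update/query time. Given a batch instance, preprocess $([n],\mathcal{S})$ once with $\mathcal{A}$; then, for each query $X_i$ in sequence, perform at most $2n$ element updates to transform the active element set from $X_{i-1}$ to $X_i$ and issue one cover-size query. Because the optimum is integer-valued, the returned value is $\leq t$ exactly when $\mathrm{opt}(X_i) = 1$ and is $> t$ exactly when $\mathrm{opt}(X_i) \geq t+1$; the minimum returned value across all $X_i$ therefore decides the YES/NO instance. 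The total running time is $\mathrm{poly}(n,m) + O(N \cdot n \cdot m^{1-\eps})$.

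\textbf{Step 3: Parameter count and main obstacle.} With $m = n^{\omega(1)}$, the factor $n \cdot m^{-\eps}$ equals $m^{-\Omega(\eps)} = o(1)$, so the total runtime is $o(N \cdot m)$, contradicting Step 1. The set-updates model is handled analogously by toggling a bounded number of sets per query instead of elements. The real difficulty is Step 1: one must design (or invoke) a distributed PCP whose accept/reject pattern embeds as a gap Set Cover instance with the prescribed $n$, $m$, and $t$, which is where the full machinery of~\cite{ARW17} together with an appropriate encoding of OV-queries by local tests is required. Once Step 1 is in place, Steps 2 and 3 are essentially bookkeeping.
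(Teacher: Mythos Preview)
Your high-level plan matches the paper's: invoke the distributed-PCP gap instance of \cite{ARW17,Rub18,Chen18}, then cycle through the ``queries'' $B\in\mathcal{B}$ by toggling elements and asking a cover-size query. However, two of your steps that you label ``bookkeeping'' in fact hide the main ideas.

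\textbf{Arbitrary polynomial preprocessing.} In your Step~3 you conclude that the total time $\mathrm{poly}(n,m)+O(N\cdot n\cdot m^{1-\eps})$ is $o(N\cdot m)$, but you only bound the second term. If the dynamic algorithm's preprocessing is, say, $m^{10}$, and your PCP instance has $N=\Theta(m)$ (which is what a direct application of \cite{ARW17} gives), then the preprocessing alone exceeds the $m^{2-o(1)}$ lower bound and there is no contradiction. The paper handles this by an \emph{asymmetrization} step (its Lemma~\ref{lem:BCP}): it partitions one side of the gap instance so that the dynamic Set Cover instance has only $m^a$ sets while the number of queries stays $m$, with lower bound $m^{1+a-o(1)}$. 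Then, given preprocessing $O((m^a)^c)$, one chooses $a=1/c$ so that preprocessing is $O(m)$ and is dominated. Without this trick your argument only rules out algorithms with near-linear preprocessing, which is much weaker than the theorem.

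\textbf{Set updates are not ``analogous''.} Your one-line dismissal does not work: with set updates you cannot change which elements are active, so you cannot directly make the active universe equal to the current query $B$. The paper's reduction for set updates introduces $n$ auxiliary singleton-like sets $S_e$ and activates $S_e$ for every $e\notin B$, but this alone destroys the $1$ vs.\ $t$ gap (the optimum becomes roughly $n$ in both cases). To recover a multiplicative gap the paper takes $k=n^2$ disjoint copies of the universe and of each $A\in\mathcal{A}$, so that in the YES case the optimum is at most $n+k$ while in the NO case it is at least $kt$; since $t=n^{o(1)}$ this restores a factor-$(t-1)$ separation. This construction is the substance of the set-update lower bound and is not a routine variant of the element-update case.
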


We can state the following corollary in terms of the frequency bound $f$.

\begin{corollary}
Assuming SETH, any dynamic algorithm for Set Cover on $n$ elements and frequency bound $f$, where $n^{\omega(1)} < f < 2^{o(n)}$, that has polynomial time preprocessing and amortized update and query time $O(f^{1-\eps})$, for some $\eps>0$, must have approximation factor at least $(n/\log{f})^{\Omega(1)}$.
\end{corollary}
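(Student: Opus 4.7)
My plan is to derive this Corollary as an almost immediate re-parameterization of Theorem \ref{thm:lb}, by instantiating the theorem with $m := f$.

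The key structural observation I would use is that in any set cover instance the element frequency $f'$ is trivially bounded by the number of sets: $f' \leq m$. Consequently, the hard instance produced by Theorem \ref{thm:lb} with parameter $m$ automatically has frequency at most $m$, so once we set $m := f$, it satisfies the promise of the corollary (namely, frequency $\leq f$). Moreover, the quantitative constraints $n^{\omega(1)} < m < 2^{o(n)}$ of the theorem align exactly with the constraints $n^{\omega(1)} < f < 2^{o(n)}$ required by the corollary, so no side condition is violated by the substitution.

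Next, I would argue by contradiction. Assume a dynamic algorithm $\mathcal{A}$ handles Set Cover on $n$-element instances of frequency $\leq f$, runs in polynomial preprocessing time, supports updates and queries in amortized $O(f^{1-\eps})$ time, and achieves an approximation ratio $\rho(n,f)$ that is \emph{not} $(n/\log f)^{\Omega(1)}$---equivalently, $\rho(n,f) = (n/\log f)^{o(1)}$. Feeding the hard instance produced by Theorem \ref{thm:lb} with $m = f$ to $\mathcal{A}$, its update and query time become $O(m^{1-\eps})$, its preprocessing time $\mathrm{poly}(n,m)$, and its approximation guarantee becomes $(n/\log m)^{o(1)}$. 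This is precisely the scenario excluded by the theorem under SETH, yielding the contradiction.

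The only subtlety worth flagging is the translation of asymptotic quantifiers: the negation of ``$\rho = (n/\log f)^{\Omega(1)}$'' is ``$\rho = (n/\log f)^{o(1)}$'', and this matches exactly the family of approximation factors $t = (n/\log m)^{o(1)}$ excluded by the theorem once $m = f$. I do not anticipate any real technical obstacle; the work is simply careful bookkeeping of the polynomial preprocessing bound, the per-update running time, and the asymptotic scaling of $\rho$ under the substitution $m = f$---all of which go through because the hypotheses of the theorem and the corollary were set up symmetrically.
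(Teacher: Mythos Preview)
Your proposal is correct and follows essentially the same approach as the paper: both derive the corollary from Theorem~\ref{thm:lb} via simple parameter bookkeeping. The only cosmetic difference is that you instantiate the theorem with $m := f$ directly (so that the hard instance automatically has frequency at most $f$, and all the bounds transfer trivially), whereas the paper instead takes the hard instance with $m$ sets, lets $f$ denote its actual frequency, and uses the two-sided estimate $f \le m \le nf < f^{2}$ to translate the $O(f^{1-\eps})$ update time into $O(m^{1-\eps})$ and the $(n/\log f)^{o(1)}$ approximation into $(n/\log m)^{o(1)}$. Your route is slightly cleaner since it avoids the upper bound $m < f^{2}$ altogether.
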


\begin{proof}
Without assuming anything about the instances in Theorem~\ref{thm:lb} we can conclude that $f \leq m$ while $m \leq n f < f^2$. 
Therefore, any approximation algorithm with factor $O((n/\log{f})^{\delta})$ also gets an approximation of $O((n/2\log{f})^{\delta}) = O((n/\log{f^2})^{\delta})$ which is smaller than $O((n/\log{m})^{\delta})$ and it is enough to refute SETH via Theorem~\ref{thm:lb}. 
\end{proof}

The rest of this section is dedicated to the proof of Theorem~\ref{thm:lb}. 
Our starting point is the following SETH-based hardness of approximation result, which was proven first in \cite{ARW17} with a slightly smaller approximation factor, and was strengthened in \cite{Chen18} using, in part, the technique of \cite{Rub18}.
These results use the distributed PCP framework of \cite{ARW17} for hardness of approximation results in P, and ours is the first application of this framework to dynamic problems.

\begin{theorem}[\cite{ARW17,Rub18,Chen18}]
\label{thm:ARW}
Let $n^{\omega(1)} < m < 2^{o(n)} $ and $t \geq 2$, such that $t = (n/ \log{m})^{o(1)}$.
Given two collections of $m$ sets $\mathcal{A},\mathcal{B}$ over a universe $[n]$, no algorithm can distinguish the following two cases in $O(m^{2-\eps})$ time, for any $\eps>0$, unless SETH is false:

\begin{description}
\item [{YES case}] there exist $A \in \mathcal{A}, B \in \mathcal{B}$ such that $B \subseteq A$;
and
\item [{NO case}] for every $A \in \mathcal{A}, B \in \mathcal{B}$ we have $| A \cap B | <  |B|/t$.
\end{description}

\end{theorem}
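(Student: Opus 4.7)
The plan is to follow the distributed PCP framework of \cite{ARW17}, amplified later in \cite{Rub18,Chen18}, which packages PCP-style gap amplification on top of the Orthogonal Vectors (OV) problem to obtain SETH-based hardness of approximation inside P.

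First I would reduce SETH to OV: under SETH, deciding whether two collections of $N$ vectors in $\{0,1\}^d$ with $d=\mathrm{polylog}(N)$ contain a pair $(a,b)$ with $\langle a,b\rangle=0$ requires $N^{2-o(1)}$ time (Williams). I would then observe that OV is already the \emph{exact} ($t=1$) version of the statement: identifying $b\in\{0,1\}^d$ with $B=\mathrm{supp}(b)$ and $a\in\{0,1\}^d$ with $A=\{i:a_i=0\}$, one has $\langle a,b\rangle=0 \iff B\subseteq A$. So the task reduces to \emph{gap-amplifying} this equivalence: showing that the NO case can be strengthened from ``$B\not\subseteq A$ by one element'' to ``$|A\cap B|<|B|/t$'' without blowing up the instance.

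The core technical ingredient I would import is an efficient Merlin--Arthur communication protocol for inner product. Alice holds $a$, Bob holds $b$, both share random coins $r$; an untrusted Merlin sends a short proof $\pi$. Using algebraic PCP machinery (encoding $a$ and $b$ as low-degree polynomials over a field of size $\Theta(t)$, letting Merlin send a low-degree proof, and having Alice and Bob spot-check at a random evaluation point) one gets: (completeness) if $\langle a,b\rangle=0$ then some $\pi$ makes both parties accept on every $r$; (soundness) if $\langle a,b\rangle\neq 0$ then for every $\pi$ the fraction of $r$ on which both parties accept is at most $1/t$. The proof length $|\pi|$ and randomness length $|r|$ are $\mathrm{polylog}(N)\cdot\mathrm{poly}(\log t)$. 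The hard part of the whole argument is precisely to obtain this protocol with soundness $1/t$ for $t$ as large as $(n/\log m)^{o(1)}$ while keeping $|\pi|,|r|$ polylogarithmic; this is where the Reed--Muller based PCP of \cite{ARW17} and the subsequent sharpened product-test and composition arguments of \cite{Rub18,Chen18} do the heavy lifting.

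Finally I would convert this MA protocol into the claimed gap set-containment instance. For each Alice vector $a$ form the set $A_a:=\{(\pi,r)\ :\ \text{Alice accepts with proof }\pi\text{ and coins }r\}$ over the universe $U=\{(\pi,r)\}$, and analogously $B_b$ from Bob's acceptance predicate. Completeness directly gives $B_b\subseteq A_a$ in the YES case, and soundness gives $|A_a\cap B_b|<|B_b|/t$ in the NO case. The construction yields $m=2N$ sets over a universe of size $n=2^{|\pi|+|r|}=N^{o(1)}$, which sits in the advertised regime $n^{\omega(1)}<m<2^{o(n)}$, and can itself be written out in $O(Nn)=O(m^{1+o(1)})$ time. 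Hence any $O(m^{2-\eps})$ algorithm for the gap set-containment problem would solve OV in $N^{2-\eps'}$ time, contradicting SETH. The sole nontrivial obstacle is the MA-protocol construction of the previous paragraph; once that is invoked as a black box, the reduction itself is syntactic.
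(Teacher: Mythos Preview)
The paper does not prove this theorem at all: it is imported verbatim from \cite{ARW17,Rub18,Chen18} and used as a black-box starting point for the lower-bound section (the paper explicitly says ``Our starting point is the following SETH-based hardness of approximation result, which was proven first in \cite{ARW17}\ldots''). So there is no in-paper proof to compare against; your write-up is an attempt to summarize what those references do, and at the high level of ``SETH $\to$ OV $\to$ MA protocol for inner product $\to$ combinatorial gap instance'' it is faithful to the distributed-PCP approach.

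That said, your final ``syntactic'' packaging into set containment has a real gap. From MA completeness you only get that \emph{some} proof $\pi^*$ makes both parties accept on every $r$; this yields $\{\pi^*\}\times R \subseteq A_a\cap B_b$, not $B_b\subseteq A_a$. Bob may well accept many other pairs $(\pi',r')$ on which Alice rejects, so the claimed containment in the YES case does not follow from the protocol properties you stated. The cited constructions avoid this, typically by enumerating proofs on one side (creating $N\cdot 2^{|\pi|}$ sets of the form $A_{a,\pi}$ over the randomness universe rather than one set $A_a$ over $(\pi,r)$ pairs) and arranging the protocol so that in the YES case the correct proof makes one side accept on all coins, or by first reducing to gap Max-IP and then translating. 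The overall plan is right, but the last step is not as mechanical as you claim and needs this extra care.
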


From this theorem and standard manipulations it is easy to conclude the following statement.
There are two differences in the statement below: first, the sizes of $\mathcal{A}$ and $\mathcal{B}$ are asymmetric, and second, the approximation is in terms of the number of sets required to cover a single $b \in B$, rather than the size of the overlap.

\begin{lemma}
\label{lem:BCP}
Let $n^{\omega(1)} < m < 2^{o(n)} $ and $t \geq 2$, such that $t = (n/ \log{m})^{o(1)}$, and for all $0<a \leq 1$.
Given two collections of sets $\mathcal{A},\mathcal{B}$ over a universe $[n]$, where $|\mathcal{B}|=m$ and $|\mathcal{A}|=m^a$, no algorithm can distinguish the following two cases in $O(m^{1+a-\eps})$ time, for any $\eps>0$, unless SETH is false:

\begin{description}
\item [{YES case}] there exist $A \in \mathcal{A}, B \in \mathcal{B}$ such that $B \subseteq A$;
and
\item [{NO case}] there do not exist $t$ sets $A_1,\ldots, A_t \in \mathcal{A}$, and a set $B \in \mathcal{B}$ such that $B \subseteq A_1 \cup \cdots \cup A_t$.
\end{description}

\end{lemma}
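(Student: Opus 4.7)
The plan is to prove the lemma by a simple reduction from Theorem~\ref{thm:ARW} via a partition of the larger collection. Given an instance $(\mathcal{A}, \mathcal{B})$ of Theorem~\ref{thm:ARW} with $|\mathcal{A}|=|\mathcal{B}|=m$ over the universe $[n]$, I would partition $\mathcal{A}$ arbitrarily into $\lceil m^{1-a} \rceil$ subcollections $\mathcal{A}_1, \dots, \mathcal{A}_{\lceil m^{1-a} \rceil}$, each of size at most $\lceil m^{a} \rceil$. Assuming for contradiction a $O(m^{1+a-\eps})$-time algorithm for the lemma, I would run it on each subinstance $(\mathcal{A}_i, \mathcal{B})$ and report YES overall iff at least one of the subinstances reports YES.

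Next, I would verify correctness of this reduction. The YES case is immediate: if some $A\in\mathcal{A}$ contains some $B\in\mathcal{B}$, then $A$ lies in exactly one $\mathcal{A}_i$, and that subinstance is a YES instance of the lemma. The NO case is the key step, and the main (though mild) obstacle in the argument: here I would use a union bound to show that the NO case of Theorem~\ref{thm:ARW} implies the NO case of the lemma on every subinstance. Specifically, if $|A\cap B| < |B|/t$ for all $A\in\mathcal{A}, B\in\mathcal{B}$, then for any $A_1,\dots,A_t \in \mathcal{A}_i$ and $B\in\mathcal{B}$,
\[
\Big|\big(A_1\cup\dots\cup A_t\big)\cap B\Big| \;\leq\; \sum_{j=1}^{t}|A_j\cap B| \;<\; t\cdot\frac{|B|}{t} \;=\; |B|,
\]
so $B\not\subseteq A_1\cup\dots\cup A_t$, as required.

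Finally, I would analyze the running time: the reduction makes $\lceil m^{1-a} \rceil$ calls to the hypothetical algorithm, each of cost $O(m^{1+a-\eps})$, for a total cost of $O(m^{2-\eps})$. This contradicts Theorem~\ref{thm:ARW} under SETH, establishing the lemma. The parameter constraints transfer directly since the universe, $m$, and $t$ are unchanged between the two problems, so the hypotheses $n^{\omega(1)} < m < 2^{o(n)}$ and $t=(n/\log m)^{o(1)}$ carry over. The only bookkeeping point to watch is that the bound $m^{1+a-\eps}$ in the lemma must be interpreted with $m=|\mathcal{B}|$, not $|\mathcal{A}|=m^a$, so that the product of running times telescopes to $m^{2-\eps}$ regardless of $a$; if it were otherwise, the reduction would yield nothing new.
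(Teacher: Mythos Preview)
Your proposal is correct and follows essentially the same approach as the paper: partition $\mathcal{A}$ into $m^{1-a}$ blocks of size $m^a$, call the hypothetical algorithm on each $(\mathcal{A}_i,\mathcal{B})$, and combine. Your NO-case analysis via the union bound $|(A_1\cup\cdots\cup A_t)\cap B|\le \sum_j |A_j\cap B| < |B|$ is exactly the step the paper sketches when it says ``any $A$ cannot cover more than a $1/t$ fraction of any $B$,'' and your remarks on parameter transfer and the interpretation of $m$ are accurate.
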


\begin{proof}
Assume for contradiction that such an algorithm exists.
Given an instance $\mathcal{A},\mathcal{B}$ of the problem in Theorem~\ref{thm:ARW} we show how to solve it in $O(m^{2-\eps})$ time.
Partition $\mathcal{A}$ into $k=m^{1-a}$ collections $\mathcal{A}_1,\ldots, \mathcal{A}_k$ of size $m^a$ each, and invoke our algorithm on the asymmetric instance $\mathcal{A}_i$, $\mathcal{B}$ for each $i =1 \cdots k$.
The total time will be $k \cdot O(m^{1+a-\eps})= O(m^{2-\eps})$.
If the original (symmetric) instance was a YES case, then clearly at least one of the $k$ asymmetric instances is a YES case.
On the other hand, if it was a NO case, then any $A \in \mathcal{A}$ cannot cover more than a $1/t$ fraction of any set $B \in \mathcal{B}$ and therefore all the asymmetric instances are NO cases.
 \end{proof}

Next we take this static set-containment problem and reduce it to dynamic Set Cover. We show two distinct reductions, a simpler one for the element updates case, and then a more complicated one with set updates.

\subsection{Element Updates}
Given an instance $\mathcal{A},\mathcal{B}$ of the problem in Lemma~\ref{lem:BCP}, we construct an instance of dynamic Set Cover with approximation factor $(t-1)$ as follows.
The universe $[n]$ will be the same, and all sets in $\mathcal{A}$ will appear in the instance.
However, the sets in $\mathcal{B}$ will not, and they will be implemented implicitly in a dynamic way.
Initially, all the universe elements are activated, and the algorithm may preprocess the instance. 
Note that the number of sets is only $m^a$.

For each set $B_i \in \mathcal{B}$ we will have a stage.
We start the stage by removing from the universe all elements $e \in B_i$ that belong to $B_i$.
After we do these $O(n)$ updates, we ask a Set Cover query.
If the answer is less than $t$ then we can stop and answer YES.
Otherwise, we finish the stage by adding back all the elements that we removed and move on to the next stage.
After we finish all $m$ stages for all the sets in $\mathcal{B}$, we answer NO.

In total we have $O(nm)$ updates and queries, and so the final runtime is $P(n,m^a)+O(nm) \cdot ( T(n,m^a)+Q(n,m^a))$. 
Assume we have an algorithm with update and query time $T(n,m^a)+Q(n,m^a)=O(m^{a \cdot (1-\eps)})$ and polynomial preprocessing, $P(n,m^a) =O(m^{a \cdot c})$ for some $c \geq 1$, then we can choose $a=1/c$ and get an algorithm for the problem in Lemma~\ref{lem:BCP} with runtime $O(m^{1+a-\eps a})$, contradicting SETH.

Finally, let us show the correctness of the answer.
If we are in the YES case, then there is a set $B \in \mathcal{B}$ that is contained in some set in $\mathcal{A}$. When we ask a query at the stage corresponding to this set $B$, the size of the minimum set cover is $1$. To see this note that all active universe elements are the elements of $B$ and so we can cover all of them with some set in $\mathcal{A}$.
Therefore, our $(t-1)$ approximation algorithm must output an answer that is less than $t$ and we will output YES.
On the other hand, if we are in the NO case, then in all stages, the size of the minimum set cover is at least $t$ since at least $t$ sets from $\mathcal{A}$ are required to cover any set in $\mathcal{B}$. Thus, the approximation algorithm will always return an answer that is at least $t$ and we will never output YES.
\subsection{Set Updates}
The previous reduction fails in this case because we are only allowed to update sets, not elements.
A natural approach for extending it is to have all sets from $\mathcal{B}$ in our instance and then at each stage we activate one of them.
This would work, except that the number of sets grows to $m$ which would only give us a weaker lower bound. 
Indeed such a simple reduction can rule out $O(m^{1-\eps})$ update times if the preprocessing is restricted to take subquadratic time.
A different idea is to add $n$ auxiliary sets, one per element, so that this set only contains that element. Then, if we want to remove an element, we can add this set and somehow ensure that it is a part of the solution so that, effectively, the corresponding element is removed.
This is the approach we take.
The main challenge, however, is that these auxiliary sets have to be picked in our set cover solution and so they contribute to the size of the optimal solution.
That is, we will no longer have a set cover of size $1$ in the YES case and the gap between the YES and NO cases changes. To overcome this issue, we introduce another idea where we create many copies of everything and combine them into one instance in a certain way. 
  
 Given an instance $\mathcal{A},\mathcal{B}$ of the problem in Lemma~\ref{lem:BCP}, we construct an instance of dynamic Set Cover with approximation factor $(t-1)$ as follows.
 
Our universe will be $k:=n^2$ times larger, and for each element $e \in [n]$ in the universe of the original instance we will add $k$ elements $e^1,\ldots,e^k$ to our instance.
(So, our universe is isomorphic to $[kn]$.)

For each set $A \in \mathcal{A}$ we construct $t$ sets $A^1,\ldots,A^k$ in our dynamic instance.
All of these sets will remain activated throughout the reduction.
The set $A^i$ contains all elements $e^i$ such that $e\in A$. That is, $A^i$ contains the $i^{th}$ copy of all the elements that were in $A$.
Note that $A^i$ does not contain $e^j$ for any $i \neq j$.

We also add sets $S_1,\ldots,S_n$ which will be activated dynamically, and we let $S_e$ contain all copies of the element $e \in [n]$. That is, $S_e$ contains $e^1,\ldots,e^k$.
These sets will allow us to simulate the deactivation of a set $B$.

Next we explain the dynamic part of the reduction.
For each set $B \in \mathcal{B}$ we have a stage where we effectively deactivate all universe elements that are not in $B$. 
To do this, we activate the set $S_e$ for all $e \notin B$ such that $e$ is not in $B$.
Note that we have activated up to $n$ sets $S_e$, and that together they cover all copies of all elements that are in the complement of $B$.
After we perform these $O(n)$ updates, we ask a Set Cover query.
If the answer to the query is at most $(n+k) \cdot (t-1)$ we return YES.
Otherwise, we undo the changes we made in this stage and we move on to the next $B \in \mathcal{B}$.
After all the stages are done, we return NO.

The runtime analysis is similar to before since the only difference is in the universe size which increased from $n$ to $kn=n^3$ but it is still $m^{o(1)}$.
We have $O(nm)$ updates and queries, and so the final runtime is $P(nk,m^a)+O(nm) \cdot (T(nk,m^a)+Q(nk,m^a))$. 
Assume we have an algorithm with update and query time $T(nk,m^a)+Q(nk,m^a)=O(m^{a \cdot (1-\eps)})$ and polynomial preprocessing, $P(nk,m^a) =O(m^{a \cdot c})$ for some $c \geq 1$, then we can choose $a=1/c$ and get an algorithm for the problem in Lemma~\ref{lem:BCP} with runtime $O(m^{1+a-\eps a})$, contradicting SETH.

Finally, we show the correctness of the answer.
For the YES case, there is a set $B \in \mathcal{B}$ that is contained in some set in $\mathcal{A}$. When we ask a query at the stage corresponding to this set $B$, the size of the minimum set cover is at most $n + k$.
This is because of the following set cover: Choose all sets $S_e$ that are active in this stage; this cover all copies of all universe elements that are not in $B$. Then choose all copies $A^i$ of the set $A \in \mathcal{A}$ that contains $B$; this covers all copies of all elements that are in $B$.
Therefore, our $(t-1)$ approximation algorithm must output an answer that is at most $(n+k)(t-1)$ and we will output YES.
On the other hand, in the NO case, the size of the minimum set cover is at least $k \cdot t$ in every stage.
This is because at least $t$ sets from $\mathcal{A}$ are required to cover any set in $\mathcal{B}$, and in a stage of some set $B$ the only way we can cover copies of elements that belong to $B$ is by choosing copies of sets $A$ that contain them. There are $k$ copies of the universe elements, and for each such copy we have to choose at least $t$ sets from $A$ to cover the elements of that copy, and these sets do not contain any elements from any other copy of the universe. 
Thus, the approximation algorithm will always return an answer that is at least $kt$, which is larger than $(n+k)(t-1)$ since $k=n^2$ and $t =n^{o(1)}$, and we will never output YES.

\newpage
\smallskip\noindent{\bf Acknowledgements.} The authors are grateful to Shay Solomon for pointing out an error in a preliminary version of this paper. The authors would also like to thank the anonymous reviewers for their insightful comments. 
\bibliographystyle{alpha}
\bibliography{cluster,ref,dpbib}

\section{High-Probability Analysis}
\sloppy
The result from Lemma \ref{lem:decrementalTime} actually holds with high probability (w.h.p.), modulo an extra quasi-linear term in the number $n$ of nodes. 

We call a globally-charged level $i$ \emph{heavy} if $p(i)\geq C\ln n$ for a constant $C=\Theta(\frac{1}{\eps^2})$ to be fixed later, and \emph{light} otherwise. Let also ${\mathcal{H}}$ and ${\mathcal{L}}$ denote the sets of heavy and light levels, respe. We bound the number of tokens charged over light and heavy levels separately. The bound on the light levels is trivial. 
\begin{lemma}\label{lem:timeLight}
Deterministically the number of tokens charged over light levels is at most $O(\frac{f^2}{\eps^3} n\log n)$. 
\end{lemma}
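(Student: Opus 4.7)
The plan is to use the per-pivot token bound established in the proof of Lemma~\ref{lem:timeGC} and simply sum it up over the light levels using the fact that light levels have very few pivots by definition.

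Concretely, I would first recall that in the token-transfer argument inside the proof of Lemma~\ref{lem:timeGC}, every charged deleted pivot $d \in D(i)$ in a globally-charged level $i$ ends up holding at most $\frac{4}{\eps} f^2 2^i$ tokens, and all tokens end up on such pivots. So the total number of tokens charged to level $i \in {\mathcal{L}}$ is at most $d(i) \cdot \frac{4 f^2}{\eps} 2^i$, and since trivially $d(i) \leq p(i)$, if $i$ is light then this is at most $C (\ln n) \cdot \frac{4 f^2}{\eps} 2^i$.

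Next I would sum this bound over $i \in {\mathcal{L}} \subseteq \{0,1,\ldots,L\}$ using the geometric series:
\begin{equation*}
\sum_{i\in {\mathcal{L}}} d(i)\cdot \frac{4f^2}{\eps} 2^i \;\leq\; \frac{4 C f^2 \ln n}{\eps} \sum_{i=0}^{L} 2^i \;\leq\; \frac{4 C f^2 \ln n}{\eps}\cdot 2^{L+1} \;\leq\; \frac{16\, C f^2 n \ln n}{\eps}.
\end{equation*}
Since $C = \Theta(1/\eps^2)$, the right-hand side is $O\!\left(\frac{f^2}{\eps^3} n \log n\right)$, as required.

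Since every step is a deterministic worst-case estimate (the bound $d(i)\leq p(i) < C\ln n$ holds by the very definition of a light level, and the per-pivot token bound from Lemma~\ref{lem:timeGC} is deterministic), the resulting estimate is deterministic, matching the statement of the lemma. There is no real obstacle here — the only mild care is to remember that tokens only accumulate at globally-charged levels (so light levels outside ${\mathcal{L}}$ are contributing nothing anyway) and that the geometric sum telescopes to $O(n)$ because $L = \lfloor \log_2 n\rfloor$.
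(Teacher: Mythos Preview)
Your proposal is correct and follows essentially the same argument as the paper: invoke the $\frac{4f^2}{\eps}2^i$ per-pivot token cap from Lemma~\ref{lem:timeGC}, bound $d(i)\le p(i)<C\ln n$ on light levels, and sum the resulting geometric series over $i=0,\ldots,L$ to get $O(\frac{f^2}{\eps}\cdot C n\log n)=O(\frac{f^2}{\eps^3}n\log n)$. The paper phrases the sum with probability/expectation notation, but the underlying deterministic estimate is identical to yours.
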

\begin{proof}
Recall that each charged deleted pivot in a globally-charged level $i$ is charged by at most $\frac{4f^2}{\eps}2^i$ tokens. The considered number is therefore deterministically upper bounded by 
\begin{align*}
\sum_{i=0}^{L}Pr[i\in {\mathcal{L}}]\cdot E[\frac{4f^2}{\eps}2^id(i) | i\in {\mathcal{L}}] & \leq \sum_{i=0}^{L}E[\frac{4f^2}{\eps}2^ip(i) | i\in {\mathcal{L}}]\\
& \leq  \sum_{i=0}^{L}\frac{4f^2}{\eps}2^i C\ln n \leq \frac{4f^2}{\eps} 2C n\ln n.
\end{align*}
\end{proof}
Consider next the heavy levels. We slightly modify the adversary such that it maximizes $Pr[\tilde{x}_p(i)\leq X]$ where $X_p(i)=\frac{\eps^4}{256}2^ip$. By precisely the same coupling argument as in Lemma \ref{lem:domination}, one obtains the following.  
\begin{lemma}\label{lem:dominationHP}
$Pr[\tilde{x}_p(i)< X_p(i)]\geq Pr[x(i)< X_p(i) | i\in GC,p(i)=p]$.
\end{lemma}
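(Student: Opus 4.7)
The plan is to adapt the coupling argument from Lemma~\ref{lem:domination} essentially verbatim. The key observation is that the coupling there produces the \emph{deterministic} identity $\tilde{x}_p(i) = x(i)$, not merely an equality in expectation. So any probabilistic statement about $\tilde{x}_p(i)$ under this coupling translates one-to-one into the corresponding statement about $x(i)$, which is exactly what we need to move from an expectation bound to a tail bound.

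More concretely, I would proceed as follows. First, restate the adversary's setup in the new (tail-oriented) game: a deletion sequence, a collection $\cS$ of $p$ sets of sizes in $[2^i,2^{i+1})$ (possibly with repetitions) and, after observing the positions $\tilde{i}(S)$, a subcollection $\cS' \subseteq \cS$ with $|\cS'| \geq \frac{\eps^2}{8}p$. The difference from Lemma~\ref{lem:domination} is that the adversary now optimizes to maximize $\Pr[\tilde{x}_p(i) < X_p(i)]$ rather than to minimize $\ex{\tilde{x}_p(i)}$. Next, fix an arbitrary execution of the decremental algorithm conditioned on $i \in GC$ and $p(i) = p$, and construct one feasible adversarial strategy that mimics it: take the same deletion order; let $\cS$ be the sets appearing at level $i$ right before each update phase involving that level (so $|\cS|=p$); and let $\cS'$ be the subcollection of those sets whose pivots are deleted and charged. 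The constraint $|\cS'| \geq \frac{\eps^2}{8}p$ is ensured by the same chain $d(i) \geq \frac{\eps}{2} ch(i) \geq \frac{\eps^2}{8} p(i) = \frac{\eps^2}{8} p$ used in Lemma~\ref{lem:domination}.

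Under this coupling the sampled pivots in $\cS$ coincide with the pivots $P(i)$ of the algorithm, so $\tilde{i}(S)$ equals $i(S)$ for every set, and therefore $\tilde{x}_p(i) = x(i)$ holds deterministically (not just in expectation). The events $\{\tilde{x}_p(i) < X_p(i)\}$ and $\{x(i) < X_p(i)\}$ are then identical, so with this particular strategy the adversary achieves
\[
\Pr[\tilde{x}_p(i) < X_p(i)] \;=\; \Pr[x(i) < X_p(i) \mid i \in GC,\, p(i)=p].
\]
Since the statement of the lemma defines $\tilde{x}_p(i)$ in terms of an adversary who \emph{maximizes} $\Pr[\tilde{x}_p(i) < X_p(i)]$, the optimal value is at least the one realized by this specific strategy, which yields the claim.

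The only subtlety I expect to think about carefully is the order of information in the adversary's game: the adversary must commit to $\cS$ and the deletion sequence before sampling, but is allowed to choose $\cS'$ after seeing the positions $\tilde{i}(S)$. This matches the algorithm exactly, since the deletion sequence and the structural level-$i$ sets do not depend on the algorithm's internal coins, while the identity of the charged deleted pivots (hence $\cS'$) does depend on the realized sampling. So the coupling is legitimate and the proof is essentially a re-reading of Lemma~\ref{lem:domination} with ``equal in expectation'' upgraded to ``equal as random variables.''
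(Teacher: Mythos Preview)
Your argument is correct and is precisely the paper's approach: the paper proves this lemma by invoking ``the same coupling argument as in Lemma~\ref{lem:domination}'', and your key observation that the coupling yields the \emph{deterministic} identity $\tilde{x}_p(i)=x(i)$ (rather than merely equality in expectation) is exactly what makes the transfer from an expectation bound to a tail-probability bound go through. One small caution on your closing remark: the level-$i$ sets that make up $\cS$ \emph{do} depend on the algorithm's randomness (pivot choices at higher levels and in earlier update phases influence which sets appear at level $i$), so the legitimacy of the coupling is a bit more delicate than you suggest---but this is a subtlety that the paper's own proof of Lemma~\ref{lem:domination} also leaves implicit, and it does not affect the argument you need here.
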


By applying Chernoff's bound (instead of Markov's inequality), we achieve the following high probability bound on $\tilde{x}_p(i)$ for large enough values of $p$.
\begin{lemma}\label{lem:boundDeletionsHP}
For $p\geq C\ln n$, $Pr[\tilde{x}_p(i) < X_p(i) ]\leq \frac{1}{n^2}$.
\end{lemma}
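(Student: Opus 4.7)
The plan is to mimic the proof of Lemma~\ref{lem:boundDeletions}, replacing Markov's inequality with a multiplicative Chernoff bound. The crucial structural observation is that, for any fixed adversary strategy (collection $\cS$ and deletion order), the indicator random variables $\mathbf{1}[\tilde{i}(S)\leq \alpha\cdot 2^i]$ for different $S\in\cS$ are mutually independent, because the pivots $\tilde{p}(S)$ are drawn independently across sets and the bad-ness of a set depends only on its own pivot.

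First, I would tune the bad-set threshold so that the target $X_p(i)=\frac{\eps^4}{256}\cdot 2^i p$ is attained exactly. Call a set $S\in \cS$ \emph{bad} if $\tilde{i}(S)\leq \frac{\eps^2}{16}\cdot 2^i$, and \emph{good} otherwise. Since $|S|\geq 2^i$ and $\tilde{p}(S)$ is uniform on $S$, each set is bad independently with probability at most $\eps^2/16$. Hence $b(i):=|\{S\in\cS: S\text{ bad}\}|$ is a sum of $p$ independent Bernoullis with $\ex{b(i)}\leq \eps^2 p/16$.

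Next, I would apply a standard multiplicative Chernoff bound with deviation factor $2$:
$$\pr{b(i)\geq \eps^2 p/8}\leq \exp\left(-\eps^2 p/48\right).$$
Choosing the constant $C$ in the definition of heavy level so that $C/48\geq 2$ (e.g., $C=96/\eps^2$ suffices), the hypothesis $p\geq C\ln n$ yields $\pr{b(i)\geq \eps^2 p/8}\leq 1/n^2$. Finally, on the good event $\{b(i)<\eps^2 p/8\}$ the adversary's optimal strategy (picking the $\eps^2 p/8$ sets of smallest $\tilde{i}(S)$) is forced to include at least $\eps^2 p/16$ good sets in $\cS'$, each contributing more than $\frac{\eps^2}{16}\cdot 2^i$ to $\tilde{x}_p(i)$. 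Hence, deterministically on this event,
$$\tilde{x}_p(i)\geq \frac{\eps^2 p}{16}\cdot \frac{\eps^2\cdot 2^i}{16}=\frac{\eps^4}{256}\cdot 2^i p = X_p(i).$$

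Combining with the Chernoff tail bound gives the $1/n^2$ guarantee against an arbitrary adversary, and Lemma~\ref{lem:dominationHP} transfers the bound back to the decremental algorithm. The only real difficulty is bookkeeping: one has to pin down the constants so that the Chernoff exponent beats $1/n^2$ with $C=\Theta(1/\eps^2)$ and so that the good-event lower bound matches the $\eps^4/256$ threshold exactly. Conceptually this is simply a concentration strengthening of Lemma~\ref{lem:boundDeletions}, enabled by independence of the per-set pivots; no new ideas beyond the ones already in play are required.
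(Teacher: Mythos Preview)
Your proposal is essentially identical to the paper's proof: you redefine ``bad'' with threshold $\frac{\eps^2}{16}\cdot 2^i$, apply the multiplicative Chernoff bound $\Pr[b(i)\ge \eps^2 p/8]\le e^{-\eps^2 p/48}$, and take $C\ge 96/\eps^2$---exactly the constants the paper uses. One small bookkeeping slip (present in the paper as well): from the event $\{b(i)<\eps^2 p/8\}$ and $|\cS'|=\eps^2 p/8$ you only get $|\cS'|-b(i)>0$ good sets, not $\eps^2 p/16$; to make the deterministic lower bound $\tilde{x}_p(i)\ge X_p(i)$ go through cleanly you should either bound $\Pr[b(i)\ge \eps^2 p/16]$ instead (still $\le e^{-\Theta(\eps^2 p)}$) or halve the bad-threshold, which does not affect the overall argument.
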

\begin{proof}
Recall that $E[b(i)]\leq \frac{\eps^2}{16}p$. The random variable $b(i)$ is the sum of independent random variables in $\{0,1\}$. Hence by Chernoff's bound\footnote{Recall that for the sum $B$ of independent $0$-$1$ random variables of expectation $\mu$ and for $\delta\geq 1$, one has $Pr[B> (1+\delta)\mu]\leq e^{-\frac{\delta \mu}{3}}$.}
$$
Pr[b(i)\geq \frac{\eps^2}{8}p]\leq e^{-\frac{1}{3}\frac{\eps^2}{16}p}\leq e^{-\frac{\eps^2}{48}C\ln n}\leq \frac{1}{n^2},
$$
where we imposed $C\geq \frac{96}{\eps^2}$. Given the event ${\mathcal{E}}=\{b(i)< \frac{\eps^2}{8}p\}$, one has that $\tilde{x}_p(i)\geq \frac{\eps^4}{256}2^ip=X_p(i)$ deterministically by the same argument as in Lemma \ref{lem:boundDeletions}. The claim follows.
\end{proof}

\begin{lemma}\label{lem:timeHeavy}
With probability at least $1-\frac{1}{n}$, the number of tokens charged over heavy levels is at most $O(\frac{f^2}{\eps^5}t)$. 
\end{lemma}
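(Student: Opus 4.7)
The plan is to mirror the expectation-based analysis of Lemma \ref{lem:decrementalTime}, but replace the Markov-style bound on $\tilde{x}_p(i)$ with the Chernoff-style bound of Lemma \ref{lem:boundDeletionsHP} and the coupling of Lemma \ref{lem:dominationHP}. The only subtlety is that $p(i)$ is itself random, so I have to stratify over its value before taking a union bound.

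First, I would fix a level $i$ and, for each $p \geq C\ln n$, argue via Lemma \ref{lem:dominationHP} and Lemma \ref{lem:boundDeletionsHP} that
$$
\pr{x(i) < X_p(i) \;\middle\vert\; i\in GC,\, p(i)=p} \;\leq\; \pr{\tilde{x}_p(i) < X_p(i)} \;\leq\; \frac{1}{n^2}.
$$
Summing over $p \geq C\ln n$ (precisely the range that makes $i$ heavy) and using the law of total probability, I obtain
$$
\pr{i \in GC\cap{\mathcal{H}},\; x(i) < X_{p(i)}(i)} \;\leq\; \sum_{p\geq C\ln n} \pr{p(i)=p,\, i\in GC}\cdot \tfrac{1}{n^2} \;\leq\; \tfrac{1}{n^2}.
$$
A union bound over the at most $L+1 = O(\log n)$ heavy levels then shows that, with probability at least $1-\frac{L+1}{n^2} \geq 1-\frac{1}{n}$ (for $n$ large enough), the good event $\mathcal{G}$ holds: for every $i \in GC\cap{\mathcal{H}}$, $x(i) \geq X_{p(i)}(i) = \frac{\eps^4}{256}\,2^i\,p(i)$.

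Finally, conditioned on $\mathcal{G}$, I would bound the tokens charged to heavy levels. Recalling that every charged deleted pivot at level $i$ carries at most $\tfrac{4f^2}{\eps}\,2^i$ tokens, that $d(i) \leq p(i)$, and that $x(i) \leq t(i)$ deterministically,
\begin{align*}
\sum_{i\in GC\cap{\mathcal{H}}} \tfrac{4f^2}{\eps}\,2^i\, d(i)
&\leq \sum_{i\in GC\cap{\mathcal{H}}} \tfrac{4f^2}{\eps}\,2^i\, p(i) \\
&\leq \sum_{i\in GC\cap{\mathcal{H}}} \tfrac{4f^2}{\eps}\cdot \tfrac{256}{\eps^4}\, x(i) \\
&\leq \tfrac{1024\,f^2}{\eps^5}\sum_{i=0}^{L} t(i) \;=\; O\!\left(\tfrac{f^2}{\eps^5}\, t\right),
\end{align*}
which is the claimed bound.

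The main obstacle is making the union bound safe without losing the $1/n$ slack. I handle it by stratifying: the heavy condition $p(i)\geq C\ln n$ is exactly what Lemma \ref{lem:boundDeletionsHP} needs in order to give a $1/n^2$ failure probability, so summing over $p$ costs nothing (we are applying total probability, not a union bound). The only union bound is over the $O(\log n)$ levels, which the $1/n^2$ per-level failure easily absorbs. The rest of the argument is then a routine repetition of the chain of inequalities used for Lemma \ref{lem:decrementalTime}, with expectations replaced by pointwise bounds under $\mathcal{G}$.
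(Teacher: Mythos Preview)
Your proof is correct and follows essentially the same route as the paper's: stratify over $p(i)=p\geq C\ln n$, apply Lemmas~\ref{lem:dominationHP} and~\ref{lem:boundDeletionsHP} to get a $1/n^2$ failure per level, union-bound over the $O(\log n)$ levels, and then convert the pointwise inequality $x(i)\geq \frac{\eps^4}{256}2^i p(i)$ into the token bound via $d(i)\leq p(i)$ and $x(i)\leq t(i)$. The only cosmetic difference is that the paper switches from $x(i)$ to $t(i)$ one step earlier and phrases the final bound as a conditional expectation; your pointwise formulation under $\mathcal{G}$ is arguably cleaner and yields the same $O(f^2 t/\eps^5)$.
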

\begin{proof}
Similarly to the proof of Lemma \ref{lem:decrementalTime}, for a given heavy level $i$ and $p\geq C\ln n$, one has
\begin{align*}
& Pr[t(i)<X_p(i) | i\in {\mathcal{H}}, p(i)=p] = Pr[t(i)<X_p(i) | i\in GC, p(i)=p] \\
\leq & Pr[x(i)<X_p(i) | i\in GC, p(i)=p] \overset{\text{Lem. \ref{lem:dominationHP}}}{\leq} Pr[\tilde{x}_p(i)<X_p(i)] \overset{\text{Lem. \ref{lem:boundDeletionsHP}}}{\leq} \frac{1}{n^2}.
\end{align*}
It therefore follows that
\begin{align*}
& Pr[t(i)< \frac{\eps^4}{256}2^i p(i) | i\in {\mathcal{H}}]\\
& =\sum_p Pr[p(i)=p | i\in {\mathcal{H}}] Pr[t(i)<X_p(i) | i\in {\mathcal{H}}, p(i)=p]\\
& \leq \sum_p Pr[p(i)=p | i\in {\mathcal{H}}]\frac{1}{n^2}\leq \frac{1}{n^2}.
\end{align*}

Hence by the union bound the probability of the event ${\mathcal{E}}$ that there exists at least one heavy level $i$ with $t(i)<\frac{\eps^4}{256}2^ip(i)$ is at most $\frac{L}{n^2}\leq \frac{1}{n}$. Given that the event ${\mathcal{E}}$ does not happen, by a similar argument as in the proof of Lemma \ref{lem:decrementalTime}, one has that the considered number of tokens is upper-bounded by 
$$
\sum_{i=0}^{L}Pr[i\in {\mathcal{H}} | \overline{{\mathcal{E}}}]\cdot \frac{1024f^2}{\eps^5}\cdot E[t(i) | i\in {\mathcal{H}},\overline{{\mathcal{E}}}]$$
$$\leq
\sum_{i=0}^{L}\frac{1024f^2}{\eps^5}\cdot E[t(i) | \overline{{\mathcal{E}}}]\leq \frac{1024f^2}{\eps^5}\cdot t.
$$ 
\end{proof}  

\begin{lemma}\label{lem:decrementalTimeHP}
The running time of the algorithm in the decremental case is at most $O(\frac{f^2}{\eps^3}n\log n+\frac{f^2}{\eps^5}\cdot t)$ with probability at least $1-\frac{1}{n}$, where $t$ is the number of deletions. 
\end{lemma}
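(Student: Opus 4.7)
The plan is to revisit the token-charging argument from Lemma~\ref{lem:timeGC} and upgrade the expected bound on $\ex{d(i)\mid i\in GC}$ to a high-probability bound, at the price of treating globally-charged levels with few pivots separately. Concretely, call a globally-charged level $i$ \emph{heavy} if $p(i)\geq C\ln n$ for a sufficiently large constant $C=\Theta(1/\eps^2)$, and \emph{light} otherwise. The total number of tokens charged to charged-deleted pivots can then be split into contributions from light and heavy levels and bounded by Lemmas~\ref{lem:timeLight} and~\ref{lem:timeHeavy}, respectively. Adding the initialization cost of $O(fn)$ from the Random Cover subroutine and a union bound over the two events yields the claimed $O(\frac{f^2}{\eps^3}n\log n+\frac{f^2}{\eps^5}t)$ runtime with probability at least $1-1/n$.

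First, I would handle the light levels \emph{deterministically}. Since each charged-deleted pivot in a globally-charged level $i$ accumulates at most $\frac{4f^2}{\eps}2^i$ tokens (by the token transfer process of Lemma~\ref{lem:timeGC}) and $d(i)\leq p(i)< C\ln n$ in any light level, summing the geometric series over $i=0,\ldots,L$ gives a deterministic bound of $O(\frac{f^2}{\eps}\cdot n\cdot C\ln n)=O(\frac{f^2}{\eps^3}n\log n)$. This is the content of Lemma~\ref{lem:timeLight} and requires no probabilistic reasoning.

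For the heavy levels I would rerun the coupling/adversary argument of Lemmas~\ref{lem:domination}--\ref{lem:boundDeletions}, but replace Markov's inequality by a Chernoff bound. The key point is that the indicator variables ``set $S$ is bad,'' i.e.\ the pivot $\tilde p(S)$ lands in the first $\frac{\eps^2}{32}\cdot 2^i$ deleted positions of $S$, are \emph{independent} across sets in the adversarial collection $\cS$, since pivots are sampled independently per set. Their sum $b(i)$ has expectation at most $\frac{\eps^2}{32}p$, so when $p\geq C\ln n$ with $C$ large enough (I would take $C\geq 96/\eps^2$), a standard multiplicative Chernoff bound gives $\pr{b(i)\geq \frac{\eps^2}{8}p}\leq 1/n^2$. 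Conditioned on the complementary event, at least half of the $\frac{\eps^2}{8}p$ selected sets are good, so deterministically $\tilde x_p(i)\geq X_p(i):=\frac{\eps^4}{256}\cdot 2^ip$, giving $\pr{\tilde x_p(i)<X_p(i)}\leq 1/n^2$. Transporting this through the coupling of Lemma~\ref{lem:dominationHP} yields $\pr{t(i)<X_p(i)\mid i\in {\mathcal{H}},p(i)=p}\leq 1/n^2$ for each heavy level.

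The final step is a union bound over the at most $L+1=O(\log n)$ heavy levels, which gives that with probability at least $1-1/n$ the bound $t(i)\geq \frac{\eps^4}{256}\cdot 2^ip(i)$ holds simultaneously at all heavy levels; conditioned on this event, the same algebraic manipulation used at the end of Lemma~\ref{lem:decrementalTime} converts the charged-token total over heavy levels into $\sum_i \frac{1024f^2}{\eps^5}\ex{t(i)\mid \overline{\mathcal{E}}}\leq \frac{1024f^2}{\eps^5}t$. I expect the main obstacle to be verifying that the independence required for Chernoff actually holds in the coupled process: in the original algorithm the sets present at level $i$ in successive update phases depend on earlier randomness, but via the coupling of Lemma~\ref{lem:domination} we hand all those choices to the adversary, so the only remaining randomness is the independent uniform sampling of one pivot per set—precisely what the concentration argument needs.
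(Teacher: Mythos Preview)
Your plan is correct and matches the paper's approach essentially line for line: the paper defines heavy and light globally-charged levels via the same threshold $p(i)\geq C\ln n$ with $C=\Theta(1/\eps^2)$, bounds the light levels deterministically (Lemma~\ref{lem:timeLight}), replaces Markov by Chernoff on the independent ``bad set'' indicators for heavy levels (Lemmas~\ref{lem:dominationHP} and~\ref{lem:boundDeletionsHP}), and finishes with a union bound over the $L+1$ levels (Lemma~\ref{lem:timeHeavy}). The only minor slip is that $b(i)<\tfrac{\eps^2}{8}p$ does not literally give ``at least half good'' among $\tfrac{\eps^2}{8}p$ selected sets; you need the threshold $\tfrac{\eps^2}{16}p$ for that (or adjust the constant in $X_p(i)$), but this is a cosmetic constant issue that does not affect the argument.
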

\begin{proof}
Trivially from Lemmas \ref{lem:timeLight} and \ref{lem:timeHeavy}.
\end{proof}
  
The same result applies to the fully-dynamic case as well.

\end{document}